\newcommand*\diff{\mathop{}\!\mathrm{d}}
\def\BState{\State\hskip-\ALG@thistlm}
\DeclareMathOperator\erf{erf}
\DeclareMathOperator\erfc{erfc}
\newtheorem{lemma}{Lemma}
\newtheorem{cor}{Corollary}
\newtheorem{ppro}{Proposition}
\newtheorem{theorem}{Theorem}
\newtheorem{corollary}{Corollary}
\def\ScaleIfNeeded{%
\ifdim\Gin@nat@width>\linewidth \linewidth \else \Gin@nat@width
\fi } \makeatother
\begin{document}

\title{    Analyzing  Large-Scale  Multiuser Molecular Communication   via  3D   Stochastic Geometry   }


\author{Yansha~Deng,~\IEEEmembership{Member,~IEEE,}
        Adam~Noel,~\IEEEmembership{Member,~IEEE,}
         Weisi~Guo,~\IEEEmembership{Member,~IEEE,}\\
        Arumugam~Nallanathan,~\IEEEmembership{Fellow,~IEEE,}
        and~Maged~Elkashlan,~\IEEEmembership{Member,~IEEE}.

\vspace{-0.2cm}
\thanks{ Manuscript received Sep. 13, 2016; revised May. 9, 2017 and  Aug. 28, 2017; accepted Aug. 30, 2017.  This paper was presented in part at the Proc. IEEE Global Telecommunications Conf. (GLOBECOM’16), Washington D.C., USA, Dec. 2016 \cite{yansha2016stochastic}. The  editor coordinating the review of this
manuscript and approving it for publication was Prof. Melodia, Tommaso.}
\thanks{
Y. Deng and A. Nallanathan are with  Department of Informatics, King's College London, London, WC2R 2LS, UK (email:\{yansha.deng, arumugam.nallanathan\}@kcl.ac.uk).}
\thanks{A. Noel is with the School of Electrical Engineering and Computer Science, University of Ottawa, Ottawa, ON, K1N 6N5, Canada (email: anoel2@uottawa.ca).}
\thanks{ Weisi~Guo is with School of Engineering, University of Warwick, West Midlands, CV4 7AL, UK  (email:  weisi.guo@warwick.ac.uk).}
\thanks{
 A. Nallanathan and M. Elkashlan are with Queen Mary University of London, London, E1 4NS,
UK  (email:\{arumugam.nallanathan, maged.elkashlan\}@qmul.ac.uk).}
}


\newcommand{\meter}{\textnormal{m}}
\newcommand{\micron}{\mu\textnormal{m}}
\newcommand{\second}{\textnormal{s}}

\newcommand{\edit}[2]{#1}

\maketitle
\vspace{-3cm}

\begin{abstract}
Information delivery using chemical molecules is an integral part of biology at multiple distance scales and has attracted recent interest in bioengineering and communication theory. Potential applications include cooperative networks with a large number of simple devices that could be randomly located (e.g., due to mobility). This paper presents the first tractable analytical model for the collective signal strength due to randomly-placed transmitters in a three-dimensional (3D) large-scale molecular communication system, either with or without degradation in the propagation environment. { Transmitter locations in an unbounded and homogeneous fluid are modelled as a homogeneous Poisson point process.} By applying stochastic geometry, analytical expressions are derived for the expected number of molecules absorbed by a fully-absorbing receiver or observed by a passive receiver. The bit error probability is derived under ON/OFF keying and either a constant or adaptive decision threshold. Results reveal that the combined signal strength increases proportionately with the transmitter density, and the minimum bit error probability can be improved by introducing molecule degradation. Furthermore, the analysis of the system can be generalized to other receiver designs and other performance characteristics in large-scale molecular communication systems.

\end{abstract}

\begin{keywords}
Large-scale molecular communication system, absorbing receiver, passive receiver,  3D stochastic geometry. 
\end{keywords}

\section{Introduction}

Molecular communication via diffusion has attracted significant bioengineering and communication engineering research interest in recent years \cite{EckfordBook13}.  Messages are delivered via molecules undergoing random walks \cite{Codling08}, which is a prevalent phenomenon in biological systems and between organisms \cite{Atkinson09} across multiple distance scales, offering transmit energy and signal propagation advantages over wave-based communications \cite{SMIET2017,Llatser13,Guo15TMBMC}. More importantly,  when compared to { electromagnetic wave-based communication systems}, molecular communication can be advantageous at very small dimensions or in specific environments, such as in salt water or human bodies.

Fundamentally, molecular communications involves modulating information on the physical properties (e.g.,  number, type, emission time) of a single molecule or group of molecules  (such as pheromones, DNA, protein). 
When modulating  the number of  molecules, each messenger node will transmit information-bearing molecules via chemical pulses. According to the theory of  Brownian motion, the average displacement of each molecule is proportional to its diffusion time and  the diffusion coefficient, however, the instantaneous displacement of each molecule differs and  is usually described by the Normal distribution \cite{howard1993radom630}. As such, a molecule emitted in a previous bit interval may arrive at the receiver during the current interval, thereby confusing  the signal detection at the receiver with intersymbol interference (ISI).

 Existing works have largely focused on  modeling   the signal strength  of a point-to-point communication channel by taking into account the self-interference that arises from previous symbols (i.e., ISI) at a passive receiver \cite{Kuran10}, at a fully absorbing receiver \cite{yilmaz2014simulation}, and at a reversible adsorption receiver \cite{Yansha16}.  
 Efforts to mitigate ISI include 
 transmitting using two different types of molecules in consecutive bit intervals \cite{tepekule2015ISI},  and designing  ISI-free codes \cite{shih2013code}.

Recent advances in bio-nanotechnology bring new opportunities for enabling molecular communication in new applications, such as  drug delivery, environmental monitoring, and pollution control. One application example is that  swarms of nano-robots could track specific targets, such as  tumour cells, to perform operations such as targeted drug delivery \cite{Douglas12}.  In such a scenario, each nano-robot may receive the signal transmitted from multiple nano-robots.  Thus, how to establish
energy efficient and tether-less communication    becomes an important research problem \cite{Cavalcanti06}.


In   nanonetworks, it is therefore important to provide a physical model for the  collective signal strength at the receiver in a large-scale system, while taking into  account  random transmitter locations due to mobility.   In  \cite{adam2014unify}, the collective signal strength  of a multi-access communication channel at a passive receiver due to co-channel transmitters (i.e., transmitters emitting the same type of molecule) was measured given the knowledge
 of their total  number and locations. In \cite{multiplea2015amini}, the capacity of the multiple access channel  with a single bit emitted at each transmitter and  a ligand-binding receiver was derived  under the assumption of a deterministic diffusion channel model.
{ The first work to consider randomly distributed co-channel transmitters in  a 3D
diffusion channel according to a spatial { homogeneous Poisson  point process (HPPP)}
is \cite{pierobon2014statistical}, where the probability density function (PDF) of the received power spectral density at a point receiver was derived based on the assumption of
white Gaussian transmit signals. The analysis in \cite{pierobon2014statistical} { considers} multiuser
emission within a single transmission interval, and the presented results {are} Monte Carlo simulations. }

%


From the perspective of receiver design, many works have focused on the passive receiver, which can observe and count the number of molecules inside the receiver without interfering with the molecules \cite{Kuran10,adam2014unify,pierobon2014statistical}. In nature, receivers commonly remove information molecules from the environment once they bind to a receptor. An ideal model is the fully absorbing receiver, which absorbs all the molecules hitting its surface \cite{yilmaz2014simulation,Yansha16}. 
Unfortunately, no work has studied the channel characteristics and the received signal at a fully absorbing receiver in a large-scale  molecular communication system, nor compared it with 
that at a passive receiver.

In this paper, we aim to provide an analytical model and bit error probability for the collective signal  
 at the passive receiver and the fully absorbing receiver  due to  a swarm of active mobile point transmitters that simultaneously emit the same  bit sequence. 
{  We extend our previous work in \cite{yansha2016stochastic} by deriving the bit error probability of a constant threshold detector at both receivers under fixed threshold-based demodulation, and applying decision feedback detection (DFD) for performance improvement.
   Our new analysis takes into account the molecule degradation
during diffusion based on the following three facts: 1)  molecules are unlikely to persist for all time, and may be degraded by  chemical reactions in
a biological environment; 2) the constant transmitter density over unbounded
space assumed in our analytical model implies that there is an infinite number of
transmitters and that ISI increasingly accumulates, which isn't practical; 3) the molecule degradation will help to reduce the
ISI and improve the probability of error.}
 
 The analytical results are obtained via the powerful tools of    stochastic geometry in 3D space, which can characterize the average behavior over many spatial realizations of a network where the transmitter nodes are placed according to some probability distribution \cite{Haenggi12}. Just as we can analyze the network performance of a random field of transmitters in conventional wireless networks, we can also apply a similar rationale for analyzing  the receiver performance due to a swarm of molecular transceivers. { However, unlike \cite{hasan2007guard,novlan2013uplink}, where the network performance is analyzed based on the distribution of the received signal-to-interference-plus-noise ratio (SINR) at a
point receiver in 2D space, we seek the mean and distribution of the number
of received molecules at a  \emph{spherical} receiver due to all  transmitters in 3D
space.}
By doing so, simple and tractable results can be obtained to reveal the key dependency of the molecular communication system performance metrics with respect to the system parameters. 
This work and \cite{yansha2016stochastic} 
  are distinct from  related work in \cite{pierobon2014statistical}, which  focused on the  statistics of the received signal at any \emph{point} location. 
Our contributions can be summarized as follows:
\begin{enumerate}
\item Using  stochastic geometry, we model the collective signal at a receiver in a 3D large-scale  molecular communication system with or without molecule degradation, where the receiver is either passive or fully absorbing. To examine the impact of the signal from the nearest transmitter relative to the aggregate signal, we also derive the signals from the nearest transmitter and the other transmitters.
\item We derive a general expression for the expected net number of 
molecules observed at both types of receivers during any time interval. In order to gain insights about the impact of the transmitter density, the diffusion coefficient, and the receiver radius on the collective signal, we simplify the general expression to a \emph{closed-form} expression 
for the expected net
number of  molecules absorbed at the fully absorbing
receiver under molecule degradation.
\item  We derive a general expression for the bit error probability at the passive or  absorbing receiver in the proposed  system   with or without molecule degradation under ON/OFF keying.  A simple  detector requiring one sample per bit interval is considered as a preliminary design for the proposed large-scale system.  Importantly, this general expression for the bit error probability can also be applied for  other types of receivers by substituting the  corresponding channel response.

\item { We focus on Monte Carlo simulation approaches to verify our analytical results, and we also compare Monte Carlo simulation to particle-based simulation of the large-scale molecular communication system.} It is shown 
that the  expected number of molecules observed at both types of receivers increases linearly with increasing transmitter density. We also show that the minimum bit error probability of both receivers can be improved by introducing  molecule degradation. 
\end{enumerate}

The rest of this paper is organized as follows. In Section II, we introduce the system model. In Section III, we present the channel impulse response of information molecules  in the large-scale molecular communication system.  In Section IV, we derive
 the exact and asymptotic net  number of absorbed molecules
expected at the surface of the absorbing receiver, and the exact  number of molecules observed inside the passive receiver in  the large-scale molecular communication system. 
 In Section V, we derive the bit error probability of the proposed system with a  simple detector requiring one sample per bit.  In Section VI, we present the numerical and simulation results. In Section VII, we conclude the contributions of this paper.

\section{System Model}

In Fig. \ref{System}, we consider a  3D diffusion-based molecular communication system with a single receiver located at the origin under joint transmission by a swarm of point transmitters, which are  spatially distributed outside the receiver in  $\mathbb{R}^3/ {V_{{\Omega _{{r_r}}}}}$ according to an independent and homogeneous Poisson point process (HPPP) \footnote{{This model is also valid for  spherical transmitters  with transparent membranes, where the locations of the point process are the molecule emission points.}}  $\Phi_a$ with density $\lambda_a$, where ${V_{{\Omega _{{r_r}}}}}$ is the volume of receiver ${{\Omega _{{r_r}}}}$ with ${V_{{\Omega _{{r_r}}}}} = {{4\pi r_r^3} \mathord{\left/
 {\vphantom {{4\pi r_r^3} 3}} \right.
 \kern-\nulldelimiterspace} 3}$.  
 HPPP has been widely used to model wireless sensor networks \cite{hasan2007guard,yan2016sensor}, homogeneous and heterogeneous cellular networks \cite{Haenggi12,yan16hetnet},   and has also been applied to model  bacterial colonies in \cite{Jeanson11} and the  interference sources in a  molecular communication system \cite{pierobon2014statistical}. 
 {  We note that we  focus on an unbounded fluid environment  with uniform diffusion and no flow currents to provide a  baseline  for the design of more complicated scenarios in  future works.}
 
   
 \begin{figure}[t]
	\centering
	\includegraphics[width=1.00\linewidth]{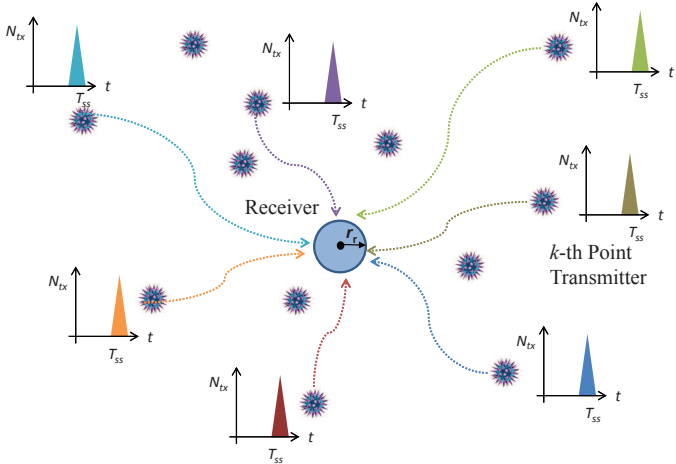}
	\caption{Illustration of a receiver receiving molecular pulse signals from point transmitters at different distances.}
	\label{System}
\end{figure}

We consider a spherical receiver that is either passive \cite{Llatser13,noel2014improving} or fully absorbing \cite{Yilmaz14}.
   The fully absorbing receiver is covered with selective independent receptors, which are only sensitive to a single type of  information molecule. Similar to \cite{Yilmaz14,Yansha16}, we   assume  that there is no physical limitation on the number of receptors on the surface of the receiver, which is an  appropriate assumption   for a system with a sufficiently large number of receptors or small number of absorbed molecules.   Any information molecule that diffuses into the sphere is absorbed by  a receptor and counted for information demodulation.
   The passive receiver is covered  with a transparent membrane that is permeable to the information molecules passing by, and  the number of information molecules inside the receiver can be counted for information demodulation as in \cite{adam2014unify}.
 
   Even though this work considers a molecular communication system with a single receiver, it provides fundamental insights that can be applied to consider systems with multiple  transceivers  in future work. For example, in the case of multiple passive receivers following HPPP,  the expected number of molecules inside  any passive receiver with the same radius will be equivalent due to  the Slivnyak-Mecke's theorem \cite{baccelli2009stochastic}. In other words, the presence of multiple passive receivers will not influence the observations at each passive receiver,  due to its transparent membrane.   
   This is also consistent with the stochastic geometry work on cellular networks, where the average ergodic rate of an arbitrary random mobile user is expressed using a single expression \cite{novlan2013uplink}.
   However, for the case of multiple absorbing receivers, the  numbers of molecules absorbed by  each absorbing receiver is not independent; in other words,  the presence or absence of an absorbing receiver influences  the  numbers of molecules absorbed by other  absorbing receivers. For a system of absorbing receivers, 
    the average observation will be harder to characterize, but understanding the single receiver system   is still the first step.
    
    A molecular communication system typically includes five processes: emission, propagation, reception, modulation, and demodulation, which are presented in detail  in the following subsections for the absorbing receiver and the passive receiver, respectively.
    \subsection{Emission $\&$ Modulation}
    Applying ON/OFF keying as in \cite{Yilmaz14,Yansha16},
    each transmitter  delivers molecular signal pulses with  $N_{\rm{tx}}$  type $S$ information molecules to the  receiver   at the start of each bit interval to represent transmit bit-1, and emits zero molecules to deliver bit-0.
 { Here, a global clock is assumed at each transmitter such that  the molecule  emissions at all the transmitters are synchronized with the same bit sequences\footnote{{ One application  is that nanomachines could send the same molecular
signal upon sensing some threshold value in the environment \cite{pierobon2014statistical}.
Perfect synchronization between all transmitters is an idealization that facilitates the analysis and leads to tractable results. However, it is not essential for the accuracy of our results, since the distribution in molecule arrival times is primarily determined by the transmitter locations.}},  and  can only occur at the start of a bit interval as in \cite{pierobon2014statistical}.  Asynchronous emission can be evaluated similarly
to synchronous emission by allowing transmitters to release molecules at the
start of intervals that are much smaller than the bit interval, and scaling the
transmitter density accordingly.}
 
 \subsubsection{Absorbing Receiver}
 In the absorbing receiver scenario, we assume  spherical symmetry, where
the transmitter is \emph{effectively} a point on the  spherical shell with radius $r_0$ away from the center of receiver and the molecules
are released from random points over the shell at $t=0$; the actual
angle to the transmitter when a molecule hits the receiver  is
irrelevant. 
{Thus, we define the initial condition as \cite[Eq.~(3.61)]{schulten2000lectures}
 \begin{align}
C^{\text{FA}}\left( {r,\left. {t \to 0} \right|{r_0}} \right) = \frac{1}{{4\pi {r_0}^2}}\delta \left( {r - {r_0}} \right), \label{initial1}
\end{align}
where $C^{\text{FA}}\left( {r,\left. {t \to 0} \right|{r_0}} \right)$  is the molecule distribution function at time ${t \to 0}$ and distance $r$ with initial distance $r_0$ .}

{ According to \eqref{initial1},  there	 is	 spherical symmetry	 that	 makes	 the	 molecules initially	
distributed	 	with	 equal	 probability	 over	 a	 spherical	 surface	 at	 distance	 $r_0$	 from	 the	 receiver.	
Mathematically,	 Eq.	\eqref{initial1}  represents the impulse response  averaged over the surface area of the ball, where ${4\pi {r_0}^2}$
is the surface area of the ball centered at the center of receiver. The	 direct	 interpretation	 	 is	 that,	due to spherical symmetry, a shell transmitter is analogous to a point transmitter.	 
As an example, consider an absorbing receiver 	and	two	molecules	that are	initially	
placed	at	two	points equidistant from the receiver.	Each	molecule	has	the	same	probabilistic	trajectory	for	hitting	the	
receiver.	 Since	 they	 are	equidistant	 from	 the	 receiver,	 we can merge	 them	 to	 a	 single	 point	
source	 to	 achieve	 the	 same	 result.}

 \subsubsection{Passive receiver} 
{ In the passive receiver scenario, we assume an asymmetric spherical model,
which accounts for the actual angle of the molecule inside the passive receiver.
  The information particles are injected into the fluid environment by a transmitter located at   ${\overrightarrow{r}}$ away from  the center of the passive receiver \cite{howard1993radom630}.}
    
    \subsection{Diffusion Under Molecule Degradation}
   
   The diffusion of molecules in the propagation process  follows   random  Brownian motion. With a sufficiently low concentration of information molecules in the fluid environment, the collisions between these molecules can be ignored and the molecules propagate independently with constant diffusion coefficient\footnote{ The diffusion coefficient can be obtained via experiment or estimated via the
Stokes-Einstein equation for spherical molecules  \cite[Ch. 5]{cussler2009diffusion}}  $D$ . { This concentration changes over time due to diffusion as described by Fick's
second law, and determines the spatial and temporal variation of non-uniform
distributions of particles \cite[Ch.2]{howard1993radom630}.}
   

 To reduce the ISI, we introduce  molecule degradation that can occur at any time  via a chemical reaction mechanism in the form of \cite{noel2014improving,heren2015degradation,arman2016reactive}
 \begin{align}
 S\mathop  \to \limits^{{k_d}} P,
 \end{align}  
{ where
 $k_d$ is the degradation rate in $\text{s}^{-1}$, and $P$ is another type of molecule that cannot be recognized by either type of receiver. The degradation rate $k_d$ relates to the
  half-life (${{\Lambda _{{1 \mathord{\left/
 {\vphantom {1 2}} \right.
 \kern-\nulldelimiterspace} 2}}}}$) of messenger molecules via $k_d  = \frac{{\ln 2}}{{{\Lambda _{{1 \mathord{\left/
 {\vphantom {1 2}} \right.
 \kern-\nulldelimiterspace} 2}}}}}$, and $k_d =0$ corresponds to the no  degradation case.  }
 
 
\subsection{Reception}
\subsubsection{Absorbing Receiver}
Any information molecules that hit                                                                                                                                                                                                                                                                                                                                                                                                                              the absorbing receiver will be captured for information demodulation. 
This reception process
at the fully absorbing receiver can be  described as{ \cite[Eq. (3.64)]{schulten2000lectures}}
\begin{align}
{\left. {D\frac{{\partial \left( {C^{\text{FA}}\left( {r,\left. t \right|{r_0}} \right)} \right)}}{{\partial r}}} \right|_{r = r_r^ + }} = {k}C^{\text{FA}}\left( {{r_r},\left. t \right|{r_0}} \right),  k \to \infty \label{boundary1}
\end{align}	
where $k$ is the absorption rate (in length$\times $time$^{-1}$).

\subsubsection{Passive receiver}
{ With a transparent membrane at the passive receiver, the information molecules can bypass the surface of the passive receiver freely,  and  molecules within the receiver can be counted at any time  \cite{howard1993radom630}.}
%
%
%

\subsection{Demodulation}\label{dem}
For equivalent comparison, the number  of molecules absorbed by the surface of the absorbing receiver  and the number of observed molecules inside the passive receiver at the end of each bit interval are collected  for information demodulation. 
More details of the demodulation at each type of receiver are described as follows.

\subsubsection{Demodulation  criterion at the absorbing receiver}
{ With spherical symmetry, we only need to focus on the number of molecules absorbed  by the surface of the receiver $r = r_r$.}
We consider an absorbing receiver that is capable of counting the net number of molecules absorbed by the surface of the receiver {as} in \cite{Yansha16} by subtracting  the number of
absorbed molecules at the end of the previous bit interval from
that at the end of the current bit interval. The net number of  molecules absorbed over the $j$th bit interval ${N_{\rm{net}}^{\rm{FA}}\left[ j \right] }$ is demodulated as the received signal of the  $j$th  bit (${N^{\rm{Rx}}\left[ j \right] } = {N_{\rm{net}}^{\rm{FA}}\left[ j \right] }$). This is because for the \emph{single} bit transmission at $t=0$, as   time  increases, the number of absorbed molecules increases, which results in increasing ISI, whereas the average net number of absorbed molecules in a given bit interval $T_b$ becomes a constant value as  $T_b$ goes to infinity in a large-scale molecular communication system  as shown in Section IV.   

\subsubsection{Demodulation  criterion at the passive receiver}
With a transparent membrane, the passive receiver is assumed to be capable of counting the number of molecules currently inside the passive receiver at the end of the $j$th bit  interval ${N_{\rm{cur}}^{\rm{PA}}\left[ j \right] }$ for information demodulation (${N^{\rm{Rx}}\left[ j \right] } = {N_{\rm{cur}}^{\rm{PA}}\left[ j \right] }$).   This is because  the  \emph{current} number of observed molecules inside the receiver can remain at a comparable value for a long  time in the large-scale molecular communication system as will be shown in Fig. \ref{Fig2} in Section VI.
For this reason, we only use a simple detector design with one sample collected at the end of each bit interval rather than multiple samples in each bit interval.

{\subsubsection{Demodulation schemes at both receivers}
 We first consider  a fixed threshold-based demodulation with the same decision threshold   $N_{\rm{th}}$ for all bits at both types of receivers, where the receiver demodulates the received signal as bit-1 if ${N^{\rm{Rx}}\left[ j \right] }  \ge N_{\rm{th}}$, and  demodulates the received signal as bit-0 if ${N^{\rm{Rx}}\left[ j \right] } < N_{\rm{th}}$.  In the  fixed threshold-based demodulation, the  received molecules ${N^{\rm{Rx}}\left[ j \right] }$ will accumulate as more bits are transmitted and molecules arrive from more distant transmitters, and  inevitably impair the system reliability as ISI.} 
 
{ To remove this accumulation, we then consider the  demodulation scheme  using a DFD \cite{noel2014overcome} with  the  decision threshold  $N_{\rm{th}}$ at both types of receivers in Section VI, based on the subtraction  between ${N^{\rm{Rx}}}$ in the current bit and that in the  previous bit. More specifically, the receiver demodulates the received signal as bit-1 if $\{{N^{\rm{Rx}}\left[ j \right] } - {N^{\rm{Rx}}\left[ j-1 \right] }\} \ge N_{\rm{th}}$, and  demodulates the received signal as bit-0 if $\{{N^{\rm{Rx}}\left[ j \right] } - {N^{\rm{Rx}}\left[ j-1 \right] }\} < N_{\rm{th}}$.}

  \section{Channel Impulse Response}
  


%

In this section, we present the channel impulse responses at the absorbing receiver and at the passive receiver in the large-scale molecular communication system due to the single bit-1 transmission at each point transmitter.

\subsection{Absorbing Receiver}
\subsubsection{Point-to-point system}
We first provide the background for the receiver observation of  a single point transmitter located distance $r_0$ away from the center  of the absorbing receiver.
To do so, we calculate the rate of absorption at the surface of the absorbing receiver due to the transmitter at distance $r_0$  via \cite[Eq.~(3.106)]{schulten2000lectures}
\begin{align}
K\left( {\left. t \right|{r_0}} \right) = 4\pi r_r^2D{\left. {\frac{{\partial C^{\text{FA}}\left( {\left. {r,t} \right|{r_0}} \right)}}{{\partial r}}} \right|_{r = {r_r}}},
\label{der21} 
\end{align} 
where the molecule distribution function
\begin{align}
C^{\text{FA}}\left( {\left. {r,t} \right|{r_0}} \right) = \frac{1}{{4\pi r{r_0}}}\frac{1}{{\sqrt {4\pi Dt} }}\left( {{e^{ - \frac{{{{\left( {r - {r_0}} \right)}^2}}}{{4Dt}}}} - {e^{ - \frac{{{{\left( {r + {r_0} - 2{r_r}} \right)}^2}}}{{4Dt}}}}} \right),
\label{der22}
\end{align} 
is derived   in \cite{schulten2000lectures}. 

Substituting \eqref{der22} into \eqref{der21}, the first hitting probability is derived as 
\begin{align}
K\left( {\left. t \right|{r_0}} \right) = \frac{{{r_r}}}{{{r_0}}}\frac{1}{{\sqrt {4\pi D{t}} }}\frac{{{r_0} - {r_r}}}{t}{e^{ - \frac{{{{\left( {{r_0} - {r_r}} \right)}^2}}}{{4Dt}}}}.
\label{rate}
\end{align}

Taking into account   molecule degradation, the fraction of molecules absorbed by the receiver due to a transmitter at distance $r_0$ during any sampling interval  $[t, t+ T_{ss}]$ with a single impulse pulse  occurring at $t=0$ reduces to  
\begin{align}
F^{{\rm{FA}}} &\left( {\left. {{\Omega _{{r_r}}},t,t + {T_{ss}}} \right|{r_0}} \right) \nonumber \\ & =  F^{{\rm{FA}}}\left( {\left. {{\Omega _{{r_r}}},0,t + {T_{ss}}} \right|{r_0}} \right) - F^{{\rm{FA}}}\left( {\left. {{\Omega _{{r_r}}},0,t} \right|{r_0}} \right),
\label{net_deg}
\end{align}
where
\begin{align}
{{F^{\rm{FA}}}\left( {\left. {{\Omega _{{r_r}}},0, t} \right|{r_0}} \right)}&  = \int_0^{t} {K\left( {\left. t \right|{r_0}} \right){e^{ - {k_d} t}}dt}
\nonumber \\& \hspace{-2.5cm} = \frac{{{r_r}}}{{{r_0}}}\exp \left( { - \sqrt {\frac{{k_d} }{D}} \left( {{r_0} - {r_r}} \right)} \right) - \frac{{{r_r}}}{{2{r_0}}}{\exp\left( - \sqrt {\frac{{k_d} }{D}} \left( {{r_0} - {r_r}} \right)\right)}
\nonumber \\& \hspace{-2.0cm} \left[ {\erf\left( {\frac{{{r_0} - {r_r}}}{{\sqrt {4Dt} }} - \sqrt {k_d}  t} \right) + {\exp\left({2\sqrt {\frac{{k_d}}{D}} \left( {{r_0} - {r_r}} \right)}\right)}} \right.
\nonumber \\& \hspace{-2.0cm} \left. {\left( {\erf\left( {\frac{{{r_0} - {r_r}}}{{\sqrt {4Dt} }} + \sqrt {k_d}  t} \right) - 1} \right) + 1} \right]
\label{der23_deg}.
\end{align}
We note that \eqref{der23_deg} is derived following the method {for  the point-to-point system} in \cite[Eq. (12)]{heren2015degradation}. We  see that increasing $k_d$ decreases the fraction of  molecules  absorbed by the absorbing receiver.

 Without molecule degradation ($k_d = 0$), ${{F^{\rm{FA}}}\left( {\left. {{\Omega _{{r_r}}},0, t} \right|{r_0}} \right)}$ simplifies to  \cite[Eq. (32)]{Yilmaz14}
\begin{align}
{{F^{\rm{FA}}}\left( {\left. {{\Omega _{{r_r}}},0, t} \right|{r_0}} \right)}&  =   \frac{{{r_r}}}{{r_0}}\erfc\Big\{ {\frac{{r_0-{r_r}}}{{\sqrt {4D{t}} }}}  \Big\}.
\label{der23_nodeg}
\end{align}

\subsubsection{Large-scale  system}

In our proposed large-scale  system, the center of an absorbing receiver is fixed at the origin of  a  3D fluid environment. 

 Using  the Slivnyak-Mecke's theorem \cite{baccelli2009stochastic}, the  fraction  of absorbed molecules  at the   receiver during any sampling interval  $[t, t+T_{ss}]$ due to an \emph{arbitrary} point transmitter $x$ at the location $\textbf{x}$ emitting a single pulse at $t=0$  ${{F^{\rm{FA}}}\left( {\left. {{\Omega _{{r_r}}},t, t+T_{ss}} \right|{\left\| \textbf{x} \right\|}} \right)}$ can be obtained via \eqref{der23_deg},
 where $\left\| \textbf{x} \right\|$ is the distance between the point transmitter and the center of the  receiver where the transmitters follow a HPPP.
 
Recalling that the propagation of each molecule is independent,  
 the cumulative fraction $F^{\rm{FA}}_{\rm{all}}$ of  absorbed  molecules at the   receiver during any sampling interval  $[t, t+T_{ss}]$ due to all  active point transmitters   emitting a single pulse at $t=0$ can be formulated as
 \begin{align}
& {{F_{\rm{all}}^{{\rm{FA}}}}\left( { {{\Omega _{{r_r}}},t, t+T_{ss}} } \right)} =\sum\limits_{x \in{\Phi _a}} {{F^{\rm{FA}}}\left( {\left. {{\Omega _{{r_r}}},t, t+T_{ss}} \right|{\left\| \textbf{x} \right\|}} \right)}  ,
\label{total_FA}
\end{align}
where  ${{F^{\rm{FA}}}\left( {\left. {{\Omega _{{r_r}}},t, t+T_{ss}} \right|{\left\| \textbf{x} \right\|}} \right)}$ can be obtained from \eqref{der23_deg}.

The  expected  \emph{net} number of   molecules  absorbed by the receiver during any sampling interval   $[t, t+T_{ss}]$ due to all of the active point transmitters   emitting a single pulse  at $t=0$   can be   calculated as 
\begin{align}
{\mathbb{E}}\left\{ {N_{{\rm{all}}}^{{\rm{FA}}}\left( { {{\Omega _{{r_r}}},t, t+T_{ss}} } \right)} \right\}  
& = {N_{\rm{tx}}} {\mathbb{E}}\Big\{{F_{\rm{all}}^{{\rm{FA}}}\left( { {{\Omega _{{r_r}}},t, t+T_{ss}} } \right)}\Big\}  
,  \label{FA_uo}
\end{align}
where $N_{{\rm{all}}}^{{\rm{FA}}}\left( { {{\Omega _{{r_r}}},t, t+T_{ss}} } \right)$ is the net number of absorbed molecules, and 
 $ {F_{\rm{all}}^{{\rm{FA}}}\left( { {{\Omega _{{r_r}}},t, t+T_{ss}} } \right)}$ is given in \eqref{total_FA}.

It is well known that the distance between the transmitter and the receiver  in molecular communication is the main contributor to the    signal strength.  In the absence of flow,
 the  nearest transmitter  will provide the strongest signal for the receiver.  In order to examine the impact of the signal from the nearest transmitter on the received signal in the large-scale molecular communication system, we present
the expected number  of   absorbed  molecules at this  receiver during any sampling interval  $[t, t+T_{ss}]$ due to a single  pulse emission by the \emph{nearest}  transmitter  as
\begin{align}
{{\mathbb{E}}_{\rm{u}}^{{\rm{FA}}}} = {\mathbb{E}}\Big\{{{F^{\rm{FA}}}\left( {\left. {{\Omega _{{r_r}}},t,t+T_{ss}} \right|{\left\| \textbf{x}^* \right\|}} \right)}\Big\} , \label{useful}
\end{align}
where $\left\| \textbf{x}^* \right\|$ denotes the distance between   the receiver and the nearest  transmitter,
\begin{align}\label{cell_selection_II}
 \quad x^{*} = \mathop {\arg \min }\limits_{x \in {\Phi _a}}  \left\| {{\textbf{x}}} \right\|, 
\end{align}
 $x^{*}$  denotes  the nearest point  transmitter  for the receiver, and ${\Phi _a}$ denotes the  set of active transmitters' positions.

To examine the impact of the aggregate signal from the remaining transmitters, we present   the expected number  of   absorbed  molecules at this  receiver during any sampling interval  $[t, t+T_{ss}]$ due to  single  pulse emissions by the other    transmitters    as
\begin{align}
 {{\mathbb{E}}_{\rm{o}}^{{\rm{FA}}}}  = {\mathbb{E}}\Bigg\{\sum\limits_{{{x \in{\Phi _a}} \mathord{\left/
 {\vphantom {{{\Phi _a}} {{{x}^*}}}} \right.
 \kern-\nulldelimiterspace} {{{x}^*}}}} {{F^{\rm{FA}}}\left( {\left. {{\Omega _{{r_r}}},t, t+T_{ss}} \right|{\left\| \textbf{x} \right\|}} \right)}\Bigg\}, \label{interference}
\end{align}
where  ${{F^{\rm{FA}}}\left( {\left. {{\Omega _{{r_r}}},t, t+T_{ss}} \right|{\left\| \textbf{x} \right\|}} \right)}$ is given in  \eqref{der23_deg}.

\subsection{Passive Receiver}
\subsubsection{Point-to-point system}
{In a point-to-point molecular communication system with a single point transmitter located at ${\overrightarrow{r}}$ relative to the center of a passive receiver with radius $r_r$, the local point concentration at the center of the passive receiver at time $t$ due to a single  pulse emission by the transmitter occurring at $t=0$ is given as \cite[Eq. (4.28)]{nelson2004biological}
\begin{align}
C\left( {\left. {{ {{\Omega}} _{{r_r}} },t} \right|{\overrightarrow{r}}} \right) = \frac{1}{{{{\left( {4\pi Dt} \right)}^{{3 \mathord{\left/
 {\vphantom {3 2}} \right.
 \kern-\nulldelimiterspace} 2}}}}}\exp \Big( { - \frac{{{{ \left| {\overrightarrow{r}} \right|
 }^2}}}{{4Dt}}} \Big), \label{Fraction_d_pass}
\end{align}
where $\overrightarrow{r}  = [x,y,z]$,  and $[x,y,z]$ are  the coordinates along the three axes.}

The fraction of   molecules observed  inside the passive receiver with volume  ${V_{{\Omega _{{r_r}}}}}$ at time $t$  is denoted as
{\begin{align}
{F^{\rm{PS}}}\left( {\left. {{\Omega _{{r_r}}},t} \right|{\left| {\overrightarrow r } \right|}} \right) & = \int\limits_{{{{\Omega _{{r_r}}}}}} {C\left( {\left. {{{\Omega}_{{r_r}}},t} \right|{\overrightarrow{r}}} \right) \diff{{{{\Omega _{{r_r}}}}}}} . \label{PS_e}
\end{align}}

In most molecular communication literature considering a passive receiver, the uniform concentration assumption inside the passive receiver is  applied, which immediately results in the  fraction of observed molecules inside the passive receiver as 
{\begin{align}
{F^{\rm{PS}}}\left( {\left. {{\Omega _{{r_r}}},t} \right|{\left| {\overrightarrow r } \right|)}} \right) & \approx {C\left( {\left. {{{\Omega}_{{r_r}}},t} \right|\left| {\overrightarrow r } \right|} \right) {{V_{{\Omega _{{r_r}}}}}}} , \label{PS_uniform}
\end{align}}
however, this result
 relies on the receiver being sufficiently far from the transmitter (see  \cite{noel2013using}), which we \emph{cannot} guarantee here since the transmitters are placed randomly.
 
 With the actual \emph{non-uniform} concentration inside the passive receiver, the  fraction of observed molecules inside the passive receiver is calculated as 
 {\begin{align}
{F^{{\rm{PS}}}}\left( {\left. {{\Omega _{{r_r}}},t} \right|\left| {\overrightarrow r } \right|} \right){\rm{ }} = \int\limits_0^{{r_r}} {\int\limits_0^{2\pi } {\int\limits_0^\pi  {C\left( {\left. {{{\Omega}_{{r_r}}},t} \right| {\overrightarrow r } } \right)r^2 \sin \theta } } } \diff \theta \diff \phi \diff r  . \label{PS_nonuniform}
\end{align}}

{The molecule degradation introduces a decaying exponential term as in \cite[Eq. (10)]{noel2014improving}. Therefore, according to \eqref{PS_nonuniform} and Theorem 2 in \cite{noel2013using},   the fraction ${F^{\rm{PS}}}$ of   molecules observed  inside the passive receiver at time $t$ due to a single  pulse emission by a transmitter at $r_0$ away from  the center of a passive receiver with radius $r_r$ at time $t=0$  is derived  as}
\begin{align}
&{F^{\rm{PS}}}\left( {\left. {{\Omega _{{r_r}}},t} \right|{r_0}} \right)  =  {e^{ - {k_d} t}} \Bigg[\frac{1}{2}\Big[ {{\rm{erf}}\Big( {\frac{{{r_r} - {r_0}}}{{2\sqrt {Dt} }}} \Big) + {\rm{erf}}\Big( {\frac{{{r_r} + {r_0}}}{{2\sqrt {Dt} }}} \Big)} \Big] \Bigg.
\nonumber \\ & \Bigg. +{\frac{{\sqrt{Dt}}}{\sqrt{\pi }{{r_0}}}} \Big[ {\exp \Big( { - \frac{{{{\left( {{r_r} + {r_0}} \right)}^2}}}{{4Dt}}} \Big) - \exp \Big( { - \frac{{{{\left( {{r_0} - {r_r}} \right)}^2}}}{{4Dt}}} \Big)} \Big]\Bigg]. \label{PS_nonuniform_der_r0}
\end{align}
We see from  \eqref{PS_nonuniform_der_r0} that increasing $k_d$ decreases the fraction of molecules observed at the passive receiver.
\subsubsection{Large-scale   system}
In the large-scale   molecular communication system with a passive receiver centered at the  origin,
 the fraction $F^{\rm{PS}}$ of  molecules  observed inside  the  passive receiver at time   $T_{ss}$ due to an \emph{arbitrary} point transmitter $x$ at the location $\textbf{x}$ emitting a single pulse at $t=0$, $ {{F^{\rm{PS}}}\left( {\left. {{\Omega _{{r_r}}},t} \right|{\left\| \textbf{x} \right\|}} \right)}$ can be obtained using \eqref{PS_nonuniform_der_r0}.

 Due to the independent propagation of each molecule,  the expected number of  molecules observed inside the receiver at time $T_{ss}$ due to a single  pulse emission by \emph{all}  transmitters at $t=0$ is given as 
\begin{align}
{\mathbb{E}}\left\{ {N_{\rm{all}}^{{\rm{PS}}}\left( { {{\Omega _{{r_r}}},T_{ss}} } \right)} \right\} = &\underbrace {N_{{\rm{tx}}}{\mathbb{E}}\Big\{{F^{{\rm{PS}}}}\left( {\left. {{\Omega _{{r_r}}},T_{ss}} \right|\left\| {{{\bf{x}}^*}} \right\|} \right)\Big\}}_{{\mathbb{E}}_{\rm{u}}^{{\rm{PS}}}}
\nonumber\\ & \hspace{-1.3cm} + \underbrace {{N_{\rm{tx}}}{\mathbb{E}}\Bigg\{{\sum _{x \in {{{\Phi _a}} \mathord{\left/
 {\vphantom {{{\Phi _a}} {{x^*}}}} \right.
 \kern-\nulldelimiterspace} {{x^*}}}}}{F^{{\rm{PS}}}}\left( {\left. {{\Omega _{{r_r}}},T_{ss}} \right|\left\| {\bf{x}} \right\|} \right)\Bigg\}}_{{\mathbb{E}}_{\rm{o}}^{{\rm{PS}}}}
, \label{PS_all}
\end{align}
where ${{F^{\rm{PS}}}\left( {\left. {{\Omega _{{r_r}}},T_{ss}} \right|{\left\| \textbf{x} \right\|}} \right)}$ is obtained using \eqref{PS_nonuniform_der_r0}, ${{\mathbb{E}}_{\rm{u}}^{{\rm{PS}}}}$ is the expected number of  molecules observed inside the receiver at time $T_{ss}$ due to the nearest transmitter, and ${{\mathbb{E}}_{\rm{o}}^{{\rm{PS}}}}$ is the expected number of  molecules observed inside the receiver at time $T_{ss}$ due to the other  transmitters. 

%

\section{Receiver Observations}

In this section, we first derive  the distance distribution between the receiver and the nearest point transmitter. Throughout this section, we focus on the receiver observations at the receivers due to a single emission at each point transmitter at $t=0$. To understand the impact of individual TXs relative to the aggregate signal,   we 
derive  exact expressions for the expected number of  molecules observed at  the receiver  due to the nearest point  transmitter and that due to the other  transmitters. We then present exact expressions for the expected number of molecules observed at the receiver due to all transmitters.

\subsection{Distance Distribution}
Unlike the  stochastic geometry modelling of wireless networks, where the transmitters are randomly located in  unbounded space, we impose that the point transmitters in a molecular communication system can only be distributed \emph{outside} the surface of the spherical receiver. Taking into account the minimum distance $r_r$ between point transmitters and the   receiver center, we derive the  PDF of the shortest distance between a point transmitter and the receiver with radius $r_r$ in the following proposition.

\begin{ppro}
The PDF of the shortest distance between any point transmitter and the receiver with radius $r_r$  in 3D space  is 
\begin{align}
{f_{\left\| {{\bf{x}^*}} \right\|}}(x) = 4\lambda_a\pi {x^2}{e^{ - \lambda_a \left( {\frac{4}{3}\pi {x^3} - \frac{4}{3}\pi {r_r}^3} \right)}}. \label{shortest_dist}
\end{align}
\end{ppro}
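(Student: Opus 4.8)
The plan is to obtain the complementary cumulative distribution function (CCDF) of $\|\mathbf{x}^*\|$ from the \emph{void probability} of the point process and then differentiate. Recall that for a HPPP $\Phi_a$ of density $\lambda_a$, the number of points falling in any bounded Borel set $B \subset \mathbb{R}^3$ is Poisson distributed with mean $\lambda_a |B|$, where $|B|$ denotes the volume of $B$; in particular $\mathbb{P}\{\Phi_a(B) = 0\} = e^{-\lambda_a |B|}$. Since the process is unbounded and homogeneous, almost surely $\Phi_a$ has infinitely many points outside the receiver, so the nearest transmitter $\mathbf{x}^*$ in \eqref{cell_selection_II} is well defined with probability one.

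The key step is to rewrite the event $\{\|\mathbf{x}^*\| > x\}$ as a void event. Because the transmitters are constrained to $\mathbb{R}^3 \setminus V_{\Omega_{r_r}}$, for any $x \ge r_r$ we have $\|\mathbf{x}^*\| > x$ if and only if there is no transmitter in the spherical shell $\{\mathbf{y} : r_r \le \|\mathbf{y}\| \le x\}$, whose volume is $\tfrac{4}{3}\pi x^3 - \tfrac{4}{3}\pi r_r^3$. Applying the void probability yields
\begin{align}
\mathbb{P}\{\|\mathbf{x}^*\| > x\} = \exp\!\left(-\lambda_a\left(\tfrac{4}{3}\pi x^3 - \tfrac{4}{3}\pi r_r^3\right)\right), \quad x \ge r_r ,
\end{align}
and $\mathbb{P}\{\|\mathbf{x}^*\| > x\} = 1$ for $x < r_r$. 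Hence the CDF is $F_{\|\mathbf{x}^*\|}(x) = 1 - \exp(-\lambda_a(\tfrac{4}{3}\pi x^3 - \tfrac{4}{3}\pi r_r^3))$ on $x \ge r_r$ (note $F_{\|\mathbf{x}^*\|}(r_r) = 0$, consistent with the minimum-distance constraint) and zero otherwise.

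Finally I would differentiate: $f_{\|\mathbf{x}^*\|}(x) = \frac{d}{dx} F_{\|\mathbf{x}^*\|}(x)$. Since $\frac{d}{dx}\big(\tfrac{4}{3}\pi x^3\big) = 4\pi x^2$, the chain rule gives $f_{\|\mathbf{x}^*\|}(x) = 4\lambda_a \pi x^2 \, e^{-\lambda_a(\frac{4}{3}\pi x^3 - \frac{4}{3}\pi r_r^3)}$ for $x \ge r_r$, which is \eqref{shortest_dist}. There is no substantial obstacle in this argument; the only point to state carefully is that excluding the ball $V_{\Omega_{r_r}}$ from the support of $\Phi_a$ is what shifts the support of the distance to $[r_r,\infty)$ and produces the $+\tfrac{4}{3}\pi\lambda_a r_r^3$ term in the exponent, and that the homogeneity of the HPPP is precisely what makes the void probability depend only on the shell volume and not on the angular location of $\mathbf{x}^*$.
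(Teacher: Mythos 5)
Your proposal is correct and follows essentially the same route as the paper's Appendix A: both identify $\Pr\{\|\mathbf{x}^*\|>x\}$ with the Poisson void probability of the spherical shell of volume $\tfrac{4}{3}\pi x^3-\tfrac{4}{3}\pi r_r^3$ and then differentiate to obtain \eqref{shortest_dist}. Your treatment is just slightly more explicit about the support $[r_r,\infty)$ and the well-definedness of the nearest transmitter, which is a fine addition but not a different argument.
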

\begin{proof}
See Appendix A.
\end{proof}

Based on the proof of Proposition 1, we also derive the PDF of the shortest distance between any point transmitter and the receiver  in 2D space in the following lemma.
\begin{cor}
The PDF of the shortest distance between any  point transmitter and the receiver  in 2D space  is given by 
{
\begin{align}
{f_{\left\| {{x^*}} \right\|}}(x) = 2\lambda_a \pi x{e^{ - \lambda_a \left( {\pi {x^2} - \pi {r_r}^2} \right)}}, \label{shortest_dist_2D}
\end{align}}
where $\lambda_a =\lambda \rho_{a}$.
\end{cor}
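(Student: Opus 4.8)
The plan is to reproduce the nearest-neighbour argument behind Proposition~1, but carried out in the plane rather than in $\mathbb{R}^{3}$, so that the volume $\tfrac{4}{3}\pi x^{3}$ of a ball is replaced by the area $\pi x^{2}$ of a disc. Concretely, I would start from the observation that, since $\Phi_a$ is a homogeneous PPP of density $\lambda_a$ supported on the region outside the disc of radius $r_r$ centred at the receiver, the event $\{\|x^{*}\|>x\}$ coincides exactly with the event that the annulus $\{\mathbf{y}:r_r<\|\mathbf{y}\|\le x\}$ contains no point of $\Phi_a$. For $x\ge r_r$ this annulus has area $\pi x^{2}-\pi r_r^{2}$, and by the void probability of a Poisson process the number of its points is Poisson-distributed with mean $\lambda_a(\pi x^{2}-\pi r_r^{2})$; hence $\Pr\{\|x^{*}\|>x\}=\exp\!\big(-\lambda_a(\pi x^{2}-\pi r_r^{2})\big)$.

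Next I would write the CDF as $F_{\|x^{*}\|}(x)=1-\exp\!\big(-\lambda_a(\pi x^{2}-\pi r_r^{2})\big)$ for $x\ge r_r$ (and $F_{\|x^{*}\|}(x)=0$ for $x<r_r$), and differentiate in $x$. Since $\tfrac{d}{dx}\big[-\lambda_a(\pi x^{2}-\pi r_r^{2})\big]=-2\lambda_a\pi x$, the chain rule yields $f_{\|x^{*}\|}(x)=2\lambda_a\pi x\,\exp\!\big(-\lambda_a(\pi x^{2}-\pi r_r^{2})\big)$ on $[r_r,\infty)$, which is precisely \eqref{shortest_dist_2D}; the identification $\lambda_a=\lambda\rho_a$ as the effective density of the active transmitters carries over unchanged from the 3D setting.

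I expect no genuine obstacle here, since the result is a routine specialisation of a classical fact about Poisson processes. The only point deserving a line of care is the lower truncation at $r_r$: one must use the area of the \emph{annulus} $[r_r,x]$ rather than of the full disc of radius $x$, which is what produces the $\pi r_r^{2}$ term in the exponent and guarantees that $f_{\|x^{*}\|}$ integrates to one over $[r_r,\infty)$. This is the exact two-dimensional counterpart of the $\tfrac{4}{3}\pi r_r^{3}$ correction appearing in \eqref{shortest_dist}.
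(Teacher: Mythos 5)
Your proposal is correct and follows essentially the same route as the paper: the paper derives Corollary~1 directly from the argument of Proposition~1 (Appendix~A), i.e., the Poisson void probability of the annular region with Lebesgue measure $\pi x^{2}-\pi r_r^{2}$ replacing the 3D shell measure $\tfrac{4}{3}\pi x^{3}-\tfrac{4}{3}\pi r_r^{3}$, followed by differentiation of $\Pr\{\|x^{*}\|>x\}$. Writing the CDF first and applying the chain rule is just a restatement of the paper's step $f_{\|x^{*}\|}(x)=-\tfrac{d}{dx}\Pr(M=0)$, so there is nothing to add.
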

With $r_r =0$, Corollary 1 reduces to \cite[Eq. (19)]{novlan2013uplink}.

\subsection{General Expected  Receiver Observations}
In this subsection,  we first derive  simple expressions for the expected number of  molecules observed at the  receiver due to the nearest  transmitter and the other  transmitters  to demonstrate their relative impact on the expected receiver observations. 

Using  Campbell’s theorem \cite[Eq. (1.18)]{baccelli2009stochastic} and Proposition 1, the   expected  \emph{net} number of   molecules observed  during any sampling  interval $[t,t+T_{ss}]$ at   the   receiver  due to the nearest transmitter and the other transmitters are derived as
\begin{align}
{{{\mathbb{E}}_{\rm{u}}}}= \;& 4{\lambda _a}\pi N_{{\rm{tx}}}{e^{\frac{4}{3}\pi {r_r}^3{\lambda _a}}}\int_{{r_r}}^\infty  {\Phi \left( r \right){r^2}\exp \Big\{ { - \frac{4}{3}\pi {r^3}{\lambda _a}} \Big\}\diff r} 
, \label{E_u_FA_PS}
\end{align}
and
\begin{align}
{{{\mathbb{E}}_{\rm{o}}}} = \;&  {\left( {4\pi {\lambda _a}} \right)^2}{e^{\frac{4}{3}\pi {r_r}^3{\lambda _a}}}N_{{\rm{tx}}}\int_{{r_r}}^\infty  {\int_x^\infty  {\Phi \left( r \right)} }{r^2}\diff r 
 \nonumber\\& \times {x^2}{e^{ - \frac{4}{3}\pi {x^3}{\lambda _a}}}\diff x, \label{E_I_PS}
\end{align}
respectively, where
\begin{align}
\Phi \left( r \right) = {F^{{\rm{FA}}}}\left( {\left. {{\Omega _{{r_r}}},t,t + {T_{ss}}} \right|r} \right), \label{phi_FA}
\end{align}
 for  the absorbing receiver, and
\begin{align}
\Phi \left( r \right) = {F^{{\rm{PS}}}}\left( {\left. {{\Omega _{{r_r}}},t + {T_{ss}}} \right|r} \right) - {F^{{\rm{PS}}}}\left( {\left. {{\Omega _{{r_r}}},t} \right|r} \right), \label{phi_PS}
\end{align}
for  the passive receiver. In \eqref{phi_FA}, $ {F_{{\rm{FA}}}\left( { {{\Omega _{{r_r}}},t, t+T_{ss}} } \right)}$ is the fraction of  molecules  absorbed by the absorbing receiver given in \eqref{net_deg}. 
In \eqref{phi_PS}, $ {F_{{\rm{PS}}}\left( { {{\Omega _{{r_r}}},t} } \right)}$ is the fraction  of   molecules observed  inside the passive receiver  given in \eqref{PS_nonuniform_der_r0}. 
 We  observe that ${{{\mathbb{E}}_{\rm{u}}}}$ and ${{{\mathbb{E}}_{\rm{o}}}}$ both  increase proportionally with the density of  transmitters. 
 


We now derive   the expected  net number of   molecules   observed at the  receiver in the following theorem.
\begin{theorem}

The  expected net  number of  molecules observed at the receiver during any sampling interval $[t,t+T_{ss}]$ due to all transmitters emitting  single pulses at $t=0$ is derived as
\begin{align}
&{\mathbb{E}}\left\{ {N_{{\rm{all}}}\left( {\left. {{\Omega _{{r_r}}},t,t+T_{ss} } \right|\left\| {\bf{x}} \right\|} \right)} \right\}= 4{N_{\rm{tx}}}\pi {\lambda _a}{\int_{{r_r}}^\infty  \Phi \left( r \right) {r^2}\diff r} 
, \label{Total_PS_final_t}
\end{align}
where   $\Phi \left( r \right)$   is given in \eqref{phi_FA} for the absorbing receiver  and \eqref{phi_PS} for the passive receiver.

\end{theorem}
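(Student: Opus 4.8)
The plan is to evaluate the expectation directly with Campbell's theorem, in the same manner as \eqref{E_u_FA_PS} and \eqref{E_I_PS} but now retaining the full point process rather than singling out the nearest transmitter. First I would use \eqref{FA_uo}--\eqref{total_FA} for the absorbing receiver, and \eqref{PS_all} for the passive receiver, to write the left-hand side of \eqref{Total_PS_final_t} as $N_{\rm{tx}}$ times the expectation of the cumulative fraction $\sum_{x \in \Phi_a} \Phi\left(\left\| \mathbf{x} \right\|\right)$, where $\Phi(r)$ is the single-transmitter response of \eqref{phi_FA} in the absorbing case and of \eqref{phi_PS} in the passive case. The essential point is that this sum is a linear functional of the HPPP $\Phi_a$ of intensity $\lambda_a$ supported on the exterior of the receiver, $\left\{ \mathbf{x}\in\mathbb{R}^3 : \left\| \mathbf{x} \right\| > r_r \right\}$.

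Applying Campbell's theorem \cite[Eq.~(1.18)]{baccelli2009stochastic} then gives
\begin{align}
\mathbb{E}\Big\{ \sum_{x \in \Phi_a} \Phi\left(\left\| \mathbf{x} \right\|\right) \Big\} = \lambda_a \int_{\left\| \mathbf{x} \right\| > r_r} \Phi\left(\left\| \mathbf{x} \right\|\right) \diff \mathbf{x}.
\end{align}
Because the integrand depends on $\mathbf{x}$ only through $r = \left\| \mathbf{x} \right\|$ and the domain is the complement of the ball of radius $r_r$, I would pass to spherical coordinates, $\diff \mathbf{x} = r^2\sin\theta\,\diff\theta\,\diff\phi\,\diff r$, and carry out the trivial angular integration to obtain the factor $4\pi$. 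This produces $4\pi\lambda_a\int_{r_r}^{\infty}\Phi(r)\,r^2\,\diff r$, and multiplying by $N_{\rm{tx}}$ yields \eqref{Total_PS_final_t}. Convergence of the radial integral is not a concern: for the absorbing receiver $\Phi(r)$ decays like $(r_r/r)\exp\!\left(-\sqrt{k_d/D}\,(r-r_r)\right)$ when $k_d>0$, and like $(r_r/r)$ times a super-polynomially small difference of complementary error functions when $k_d=0$, so $r^2\Phi(r)$ is integrable on $[r_r,\infty)$; the passive-receiver response in \eqref{PS_nonuniform_der_r0} carries the same kind of Gaussian/exponential tail.

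Since this is essentially a bookkeeping consequence of the Poisson structure, there is no genuine analytic obstacle; the only point requiring care is keeping the integration domain equal to the exterior of the receiver, which is precisely what fixes the lower limit $r_r$ and which reduces to the familiar unbounded-space formula when $r_r=0$. As an internal consistency check I would verify that \eqref{Total_PS_final_t} equals $\mathbb{E}_{\rm u}+\mathbb{E}_{\rm o}$: in \eqref{E_I_PS} one interchanges the order of the $x$- and $r$-integrations over $\left\{ r_r \le x \le r \right\}$ and uses $\int_{r_r}^{r} x^2 e^{-\frac{4}{3}\pi x^3\lambda_a}\,\diff x = \frac{1}{4\pi\lambda_a}\big(e^{-\frac{4}{3}\pi r_r^3\lambda_a}-e^{-\frac{4}{3}\pi r^3\lambda_a}\big)$, which turns \eqref{E_I_PS} into $4\pi\lambda_a N_{\rm{tx}}\int_{r_r}^{\infty}\Phi(r)r^2\,\diff r - \mathbb{E}_{\rm u}$; the $\mathbb{E}_{\rm u}$ contribution cancels and \eqref{Total_PS_final_t} is recovered.
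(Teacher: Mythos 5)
Your proposal is correct and follows essentially the same route as the paper's Appendix B: write the expected observation as $N_{\rm{tx}}$ times the expectation of the sum of single-transmitter responses over the HPPP, apply Campbell's theorem \cite[Eq.~(1.18)]{baccelli2009stochastic}, and integrate in spherical coordinates over the exterior of the receiver to obtain the factor $4\pi$ and the radial integral from $r_r$ to $\infty$. Your added remarks on integrability of $r^2\Phi(r)$ and the consistency check against ${\mathbb{E}}_{\rm u}+{\mathbb{E}}_{\rm o}$ are sound extras not present in the paper, but the core argument is identical.
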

\begin{proof}
See Appendix B.
\end{proof}
From Theorem 1,  we find that the  expected net number of  observed molecules at the  receiver   is linearly proportional to the density of  transmitters, which will positively improve
the peak  observation, but negatively
bring increased ISI.

\subsection{Absorbing Receiver without Molecule Degradation}
To obtain additional insights, we
 now present the exact and asymptotic   expressions for   the  expected net number of  molecules  absorbed  by the absorbing receiver without molecule degradation in closed-form.  We  only consider  the absorbing receiver  here because it leads to a simple insightful expression.
 
\begin{lemma}
With $k_d=0$, the   expected  net number of molecules absorbed by the  absorbing receiver in 3D space during any sampling interval $[t,t+T_{ss}]$ is derived as
\begin{align}
{\mathbb{E}}&\left\{ N_{{\rm{all}}}^{{\rm{FA}}}\left( { {{\Omega _{{r_r}}},t,t + {T_{ss}}} } \right) \right\}  
\nonumber\\&= 4{N_{{\rm{tx}}}}\sqrt \pi  {\lambda _a}{r_r}\left[ {D\sqrt \pi  {T_{ss}} + 2\sqrt D {r_r}\left( {\sqrt {{T_{ss}} + t}  - \sqrt t } \right)} \right].  \label{Total_FA_final}
\end{align}

The   expected  total number of   molecules being absorbed by time $t$ at   the  absorbing receiver in 3D space  is derived as
\begin{align}
{\mathbb{E}}&\left\{ N_{{\rm{all}}}^{{\rm{FA}}}\left( { {{\Omega _{{r_r}}},0,t} } \right) \right\}  
= 4{N_{{\rm{tx}}}}\sqrt \pi  {\lambda _a}{r_r}\left[ {D t\sqrt \pi  + 2{r_r}\sqrt {D t} } \right].  \label{Total_FA_final_t}
\end{align}

\end{lemma}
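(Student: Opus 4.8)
The plan is to specialize Theorem 1 to the case $k_d = 0$ and then evaluate the resulting integral in closed form. From Theorem 1, the expected net number of molecules absorbed during $[t, t+T_{ss}]$ is
\begin{align}
{\mathbb{E}}\left\{ N_{{\rm{all}}}^{{\rm{FA}}}\left( { {{\Omega _{{r_r}}},t,t + {T_{ss}}} } \right) \right\} = 4{N_{\rm{tx}}}\pi {\lambda _a}\int_{{r_r}}^\infty \Phi(r) {r^2}\diff r, \nonumber
\end{align}
where, by \eqref{phi_FA} and \eqref{net_deg}, $\Phi(r) = F^{\rm{FA}}(\Omega_{r_r}, 0, t+T_{ss} \mid r) - F^{\rm{FA}}(\Omega_{r_r}, 0, t \mid r)$. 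With $k_d = 0$ I would substitute the simple closed form \eqref{der23_nodeg}, namely $F^{\rm{FA}}(\Omega_{r_r},0,t\mid r) = \frac{r_r}{r}\erfc\!\big(\frac{r - r_r}{\sqrt{4Dt}}\big)$, so that $\Phi(r) = \frac{r_r}{r}\big[\erfc(\frac{r-r_r}{\sqrt{4D(t+T_{ss})}}) - \erfc(\frac{r-r_r}{\sqrt{4Dt}})\big]$. The factor $r_r/r$ cancels one power of $r$, leaving $\int_{r_r}^\infty r\,[\erfc(\cdot) - \erfc(\cdot)]\diff r$, i.e. an integral of $r$ against a complementary error function of an affine argument in $r$.

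The key computational step is therefore the evaluation of $\int_{r_r}^\infty r\,\erfc\!\big(\frac{r-r_r}{\sqrt{4D\tau}}\big)\diff r$ for a generic $\tau$ (applied with $\tau = t+T_{ss}$ and $\tau = t$, then subtracted). I would change variables to $u = \frac{r - r_r}{\sqrt{4D\tau}}$, so $r = r_r + u\sqrt{4D\tau}$ and $\diff r = \sqrt{4D\tau}\,\diff u$, turning the integral into $\int_0^\infty (r_r + u\sqrt{4D\tau})\sqrt{4D\tau}\,\erfc(u)\,\diff u$, a linear combination of the standard moments $\int_0^\infty \erfc(u)\diff u = \frac{1}{\sqrt\pi}$ and $\int_0^\infty u\,\erfc(u)\diff u = \frac14$. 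This yields a clean expression proportional to $r_r\sqrt{D\tau} + D\tau$ (up to constants), and subtracting the $\tau = t$ term from the $\tau = t+T_{ss}$ term produces the combination $D\sqrt\pi\, T_{ss} + 2\sqrt{D}\,r_r(\sqrt{t+T_{ss}} - \sqrt t)$ inside the bracket, matching \eqref{Total_FA_final}. For the cumulative statement \eqref{Total_FA_final_t} I would set $t = 0$ and $T_{ss} = t$ (or equivalently take $\Phi(r) = F^{\rm{FA}}(\Omega_{r_r},0,t\mid r)$ directly in Theorem 1), so only the single $\tau = t$ term survives and the bracket collapses to $D\sqrt\pi\, t + 2 r_r \sqrt{Dt}$, as claimed.

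The main obstacle I anticipate is purely bookkeeping rather than conceptual: one must verify that the interchange of summation/expectation and integration used to reach Theorem 1 remains valid (the $\erfc$ decays super-exponentially, so the integrals converge absolutely and this is routine), and one must be careful to track the constant factors $4$, $\pi$, $\sqrt\pi$, and the $\sqrt{4D\tau} = 2\sqrt{D\tau}$ rescaling so that the $\pi$'s and $2$'s land correctly in the final bracketed expression. I would also double-check the moment $\int_0^\infty u\,\erfc(u)\,\diff u$ by integration by parts (differentiate $\erfc$, integrate $u$), since an error there would shift the coefficient of the $r_r\sqrt{D\tau}$ term. No delicate analysis is needed beyond confirming convergence; the result follows from Theorem 1 by direct substitution and elementary integration.
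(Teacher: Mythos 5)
Your proposal is correct and follows essentially the same route as the paper: both specialize Theorem 1 to $k_d=0$ and reduce the result to elementary integrals after the substitution $z=r-r_r$ (equivalently $u=(r-r_r)/\sqrt{4D\tau}$), and your moment values $\int_0^\infty\erfc(u)\,\diff u=1/\sqrt{\pi}$ and $\int_0^\infty u\,\erfc(u)\,\diff u=1/4$ reproduce \eqref{Total_FA_final} and \eqref{Total_FA_final_t} exactly. The only cosmetic difference is that the paper inserts the first-hitting-rate density and evaluates Gaussian moments in $z$ before doing the time integral over $[t,t+T_{ss}]$, whereas you work directly with the time-integrated $\erfc$ form \eqref{der23_nodeg} and obtain the net value as a difference of the cumulative expression, which is an equivalent bookkeeping of the same computation.
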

\begin{proof}
See Appendix C.
 \end{proof}
 
 From Lemma 1,  we find that the  expected net number of  molecules absorbed by the absorbing receiver    increases with increasing  diffusion coefficient or receiver radius. As expected, we find that the expected \emph{total} number of   molecules  absorbed by time $ t$  is always increasing with  $ t$ and does not converge, even though there was only one release by each transmitter.

Next, we  examine the asymptotic results for the  expected  net number of  molecules absorbed by the  absorbing receiver during any sampling interval $[t,t+T_{ss}]$  as $t \to \infty$ to find the maximum expected net number of  absorbed molecules.
\begin{lemma}
With $k_d=0$ and as $t \to \infty$, the expected  net number of  molecules absorbed by the  absorbing receiver during any sampling interval $[t,t+T_{ss}]$ in 3D space  is derived as
\begin{align}
{\mathbb{E}}&\left\{ {N_{{\rm{all}}}^{{\rm{FA}}}\left( {{\Omega _{{r_r}}},t, t+{T_{ss}}} \right)} \right\} \mathop  = \limits^{t \to \infty }  4\pi N_{{\rm{tx}}}{\lambda _a}{r_r}D{T_{ss}}.  \label{Total_FA_final_asy}
\end{align}
\end{lemma}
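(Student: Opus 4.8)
The plan is to obtain this directly from the closed-form expression \eqref{Total_FA_final} established in Lemma 1, rather than re-deriving the integral from scratch. Lemma 1 already gives
\begin{align}
{\mathbb{E}}\left\{ N_{{\rm{all}}}^{{\rm{FA}}}\left( { {{\Omega _{{r_r}}},t,t + {T_{ss}}} } \right) \right\} = 4{N_{{\rm{tx}}}}\sqrt \pi  {\lambda _a}{r_r}\left[ {D\sqrt \pi  {T_{ss}} + 2\sqrt D {r_r}\left( {\sqrt {{T_{ss}} + t}  - \sqrt t } \right)} \right], \nonumber
\end{align}
so the only task is to evaluate the limit of the bracketed quantity as $t\to\infty$. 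The first summand $D\sqrt{\pi}\,T_{ss}$ is independent of $t$, so everything reduces to the behaviour of $\sqrt{T_{ss}+t}-\sqrt{t}$.

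First I would rationalize that difference: multiplying and dividing by $\sqrt{T_{ss}+t}+\sqrt{t}$ gives
\begin{align}
\sqrt{T_{ss}+t}-\sqrt{t} = \frac{T_{ss}}{\sqrt{T_{ss}+t}+\sqrt{t}}, \nonumber
\end{align}
and since $T_{ss}$ is fixed while the denominator grows without bound, this expression tends to $0$ as $t\to\infty$. Substituting back, the second term $2\sqrt{D}\,r_r\left(\sqrt{T_{ss}+t}-\sqrt{t}\right)$ vanishes in the limit, leaving $4{N_{{\rm{tx}}}}\sqrt \pi  {\lambda _a}{r_r}\cdot D\sqrt{\pi}\,T_{ss} = 4\pi N_{{\rm{tx}}}\lambda_a r_r D T_{ss}$, which is the claimed identity \eqref{Total_FA_final_asy}.

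There is essentially no hard step here: the result is a one-line consequence of the previously proved closed form, and the only thing worth stating carefully is that the limit may be taken termwise because \eqref{Total_FA_final} is an exact expression valid for every $t\ge 0$ (no asymptotic approximation or interchange of limit and integral is needed at this stage — that work was already done in Appendix C). I would also briefly remark on the interpretation: the $t$-dependent contribution captures the transient accumulation of molecules arriving from increasingly distant transmitters, and its vanishing shows that the per-interval net absorption saturates at the constant value $4\pi N_{{\rm{tx}}}\lambda_a r_r D T_{ss}$, which is what makes the net-count demodulation criterion of Section \ref{dem} well posed in the large-scale regime.
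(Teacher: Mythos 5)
Your proposal is correct and matches the paper's (implicit) argument: the paper offers no separate proof for this lemma precisely because it follows from the exact closed form in Lemma 1 by letting $t \to \infty$, with the $t$-dependent term $\sqrt{T_{ss}+t}-\sqrt{t} = T_{ss}/\bigl(\sqrt{T_{ss}+t}+\sqrt{t}\bigr)$ vanishing, exactly as you argue. The resulting constant $4\pi N_{\rm tx}\lambda_a r_r D T_{ss}$ and your remark about termwise limits being unproblematic for an exact expression are both fine.
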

Lemma 2 reveals that as time sufficiently increases, the expected  net number of  molecules absorbed by the  absorbing receiver becomes a constant determined by the sampling interval. More importantly, this also reveals that the expected  net number of  absorbed molecules during the bit interval 
increases with the number of transmitted symbols (i.e., ISI).

\section{Error Probability}
In this section, we move from the expected receiver observations to the instantaneous receiver observations and the  bit error probability of the large-scale  molecular communication system with the absorbing receiver and the passive receiver under  molecule degradation.
This section focuses on simple detectors requiring one sample per bit, where  the net number of molecules absorbed by the surface of the absorbing  receiver during each bit interval, and the number of molecules observed inside the passive receiver at the end of each bit interval, are sampled for information demodulation.  The bit error probability of the proposed system with a DFD involves the subtraction of two dependent variables as shown in Section \ref{dem}, which is  analytically non-trivial to derive.

\subsection{Instantaneous  Absorbing Receiver Observations}

We first present the net number of    molecules  absorbed by the receiver  in the $j$th bit due to all the  point transmitters ${\Phi _a}$  with multiple transmitted bits   as 
\begin{align}
&N_{{\rm{net}}}^{{\rm{FA}}}\left[ j \right]\sim \sum\limits_{{{x \in{\Phi _a}} }} \sum\limits_{i = 1}^j {b_i}
\nonumber\\& \hspace{1cm} \times B\Big( {N_{{\rm{tx}}},{{F^{\rm{FA}}}\left( {\left. {{\Omega _{{r_r}}},\left( {j - i} \right){T_b},(j-i+1){T_b}} \right|{\left\| \textbf{x} \right\|}} \right)} } \Big),
\label{net_FA}
\end{align}
where ${{F^{\rm{FA}}}\left( {\left. {{\Omega _{{r_r}}},\left( {j - i} \right){T_b},(j-i+1){T_b}} \right|{\left\| \textbf{x} \right\|}} \right)}$ can be obtained via
\eqref{net_deg}, 
and $b_i$ is the $i$th transmitted bit.

The sum of binomial random variables in \eqref{net_FA} does not lend itself to easy evaluation, thus we apply the Poisson approximation as in \cite{Yansha16} to represent \eqref{net_FA} as 
\begin{align}
N_{{\rm{net}}}^{{\rm{FA}}}\left[ j \right]\sim P &\Big( {N_{{\rm{tx}}}\sum\limits_{i = 1}^j {{b_i}{\sum _{x \in {\Phi _a}}}} } \Big.
\nonumber \\& \Big. {F^{{\rm{FA}}}\left( {\left. {{\Omega _{{r_r}}},\left( {j - i} \right){T_b},(j - i + 1){T_b}} \right|\left\| {\bf{x}} \right\|} \right)} \Big).
\label{net_FA_po}
\end{align}

\subsection{Instantaneous  Passive Receiver Observations}
The number of    molecules  observed inside the passive receiver  in the $j$th bit due to all the active point transmitters $x$  with multiple transmitted bits   is expressed as 
\begin{align}
{N_{\rm{cur}}^{\rm{PS}}  \left[ j \right] } \sim  \sum\limits_{{{x \in{\Phi _a}} }} \sum\limits_{i = 1}^j {{b_i}B\left( {N_{{\rm{tx}}},F^{{\rm{PS}}}\left( {{\Omega _{{r_r}}},\left( {j - i + 1} \right){T_b}} \right)} \right)} ,
\label{cur_PS}
\end{align}
where 
${{F^{\rm{PS}}}\left( {\left. {{\Omega _{{r_r}}},\left( {j - i + 1} \right){T_b}} \right|{\left\| \textbf{x} \right\|}} \right)}$ can be obtained  via \eqref{PS_nonuniform_der_r0}.

Using the Poisson approximation, we  write \eqref{cur_PS} as 
\begin{align}
&N_{{\rm{cur}}}^{{\rm{PS}}}\left[ j \right]\sim
 P\Big( {N_{{\rm{tx}}}\sum\limits_{i = 1}^j {{b_i}\sum\limits_{x \in {\Phi _a}} {{F^{{\rm{PS}}}}\left( {\left. {{\Omega _{{r_r}}},(j - i + 1){T_b}} \right|\left\| {\bf{x}} \right\|} \right)} } } \Big).
\label{cur_PS_po}
\end{align}

\subsection{General Bit Error Probability }
Based on \eqref{net_FA_po} and \eqref{cur_PS_po}, we can unify the demodulation variable at both receivers for simplicity  as 
\begin{align}
&N\left[ j \right]\sim
 P\Big( {{\sum\limits_{x \in {\Phi _a}}N_{{\rm{tx}}} {{R}\left( {\left. {{\Omega _{{r_r}}},j} \right|\left\| {\bf{x}} \right\|} \right)} } } \Big),
\label{gen_po}
\end{align}
where 
\begin{align}
&R\left( {\left. {{\Omega _{{r_r}}},j} \right|\left\| {\bf{x}} \right\|} \right) \nonumber  \\ &\hspace{0.5cm}= \sum\limits_{i = 1}^j {{b_i}{F^{{\rm{FA}}}}\left( {\left. {{\Omega _{{r_r}}},\left( {j - i} \right){T_b},(j - i + 1){T_b}} \right|\left\| {\bf{x}} \right\|} \right)}, 
\label{R_FA}
\end{align}
for the absorbing receiver, and
\begin{align}
R\left( {\left. {{\Omega _{{r_r}}},j} \right|\left\| {\bf{x}} \right\|} \right)  = \sum\limits_{i = 1}^j {{b_i}{F^{{\rm{PS}}}}\left( {\left. {{\Omega _{{r_r}}},(j - i + 1){T_b}} \right|\left\| {\bf{x}} \right\|} \right)}, 
\label{R_PS}
\end{align}
for the passive receiver. 

In \eqref{R_FA} and \eqref{R_PS}, ${{F^{\rm{FA}}}\left( {\left. {{\Omega _{{r_r}}},\left( {j - i} \right){T_b},(j-i+1){T_b}} \right|{\left\| \textbf{x} \right\|}} \right)}$  and ${{F^{\rm{PS}}}\left( {\left. {{\Omega _{{r_r}}},\left( {j - i + 1} \right){T_b}} \right|{\left\| \textbf{x} \right\|}} \right)}$  are given in 
\eqref{net_deg} and \eqref{PS_nonuniform_der_r0}, respectively.

Compared with the instantaneous receiver observations of a point-to-point system, the instantaneous receiver observations of a large-scale molecular communication system need to account for the statistics of random molecule arrivals from many randomly-placed transmitters.
Based on \eqref{gen_po}, with the fixed threshold-based demodulation, the bit error probability  of  the $j$th randomly-transmitted bit  is derived in the following theorem.

\begin{theorem}
The bit error probability  of the large-scale  molecular communication system  in the $j$th bit   is derived as 
\begin{align}
{P_e}\left[ j \right] = &{P_1}{P_e}\left[ {{{\hat b}_j} = 0\left| {{b_j} = 1,{b_{1:j - 1}}} \right.} \right]
\nonumber\\&  + {P_0}{P_e}\left[ {{{\hat b}_j} = 1\left| {{b_j} = 0,{b_{1:j - 1}}} \right.} \right], \label{overallerror}
\end{align}
where 
\begin{align} 
	&{P_e}\left[ {{{\hat b}_j} = 0\left| {{b_j} = 1,{b_{1:j - 1}}} \right.} \right] \approx
	\nonumber\\ &  \exp \left\{ { - 4\pi{\lambda _a}\int_{{r_r}}^\infty  {\left( {1 - \exp \left\{ { - {N_{{\rm{tx}}}}R\left( {\left. {{\Omega _{{r_r}}},j} \right|r} \right)} \right\}} \right)}  {r^2}\diff r} \right\}
	\nonumber \\& \times\Bigg[ {1 + \sum\limits_{n = 1}^{{N_{{\rm{th}}}} - 1} {\sum\limits^{ n} {\frac{1}{{\prod\limits_{k = 1}^n {{n_k}!k{!^{{n_k}}}} }}\prod\limits_{k = 1}^n {\Big[ { - 4\pi {\lambda _a}\times } \Big.} } } } \Bigg.
	\nonumber \\&\Bigg. {{{\Big. \int_{{r_r}}^\infty  {{{{\left( {{N_{{\rm{tx}}}}R\left( {\left. {{\Omega _{{r_r}}},j} \right|r} \right)} \right)}^k}} }{ {\exp \left\{ { - {N_{{\rm{tx}}}}R\left( {\left. {{\Omega _{{r_r}}},j} \right|r} \right)} \right\}} {r^2}\diff r} \Big]}^{{n_k}}}} \Bigg]
,\label{BER_final}
\end{align}
and
\begin{align} 
	&{P_e}\left[ {{{\hat b}_j} = 1\left| {{b_j} = 0,{b_{1:j - 1}}} \right.} \right] \approx
	\nonumber\\ &  1-\exp \left\{ { - 4\pi {\lambda _a}\int_{{r_r}}^\infty  {\left( {1 - \exp \left\{ { - {N_{{\rm{tx}}}}R\left( {\left. {{\Omega _{{r_r}}},j} \right|r} \right)} \right\}} \right)} {r^2}\diff r} \right\}
	\nonumber \\& \times\Bigg[ {1 + \sum\limits_{n = 1}^{{N_{{\rm{th}}}} - 1} {\sum\limits^{ n}  {\frac{1}{{\prod\limits_{k = 1}^n {{n_k}!k{!^{{n_k}}}} }}\prod\limits_{k = 1}^n {\Big[ { - 4\pi{\lambda _a\times} } \Big.} } } } \Bigg.
	\nonumber \\&\Bigg. {{{\Big. \int_{{r_r}}^\infty  {{{{\left( {{N_{{\rm{tx}}}}R\left( {\left. {{\Omega _{{r_r}}},j} \right|r} \right)} \right)}^k}} }{ {\exp \left\{ { - {N_{{\rm{tx}}}}R\left( {\left. {{\Omega _{{r_r}}},j} \right|r} \right)} \right\}}  {r^2}\diff r} \Big]}^{{n_k}}}} \Bigg]
,\label{BER_final2}
\end{align}
 the summation $\sum\limits^n$ is over all n-tuples of nonegative integers ($n_1,...,n_n$) satisfying the constraint $1 \cdot {n_1} + 2 \cdot {n_2} + \cdots + k \cdot {n_k} +  \cdots +n \cdot {n_n} = n$, ${b_{1:j - 1}}$ is the bit sequence from the  first bit to the $(j-1)$th bit, ${{\hat b}_j}$ is the detected $j$th bit, and $P_1$ and $P_0$ denote the probability of sending bit-1 and bit-0, respectively. In \eqref{BER_final} and \eqref{BER_final2},  $R\left( {\left. {{\Omega _{{r_r}}},j} \right|r} \right)$ is given in \eqref{R_FA} for the absorbing receiver  and \eqref{R_PS} for the passive receiver, respectively.
\end{theorem}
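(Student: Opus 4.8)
The plan is to read $N[j]$ in \eqref{gen_po} as a doubly-stochastic (Cox) Poisson model: conditioned on a realization of the transmitter process $\Phi_a$, $N[j]$ is Poisson with the \emph{random} mean $\Lambda_j \defequal N_{\rm tx}\sum_{x\in\Phi_a}R(\Omega_{r_r},j\,|\,\|\mathbf{x}\|)$, where $R$ is given by \eqref{R_FA} for the absorbing receiver and by \eqref{R_PS} for the passive receiver and is nonnegative since each $b_i\in\{0,1\}$ and each channel fraction in \eqref{net_deg}, \eqref{PS_nonuniform_der_r0} lies in $[0,1]$. Under ON/OFF keying with fixed threshold $N_{\rm th}$, a transmitted bit-1 is missed exactly when $N[j]<N_{\rm th}$ and a transmitted bit-0 causes a false alarm exactly when $N[j]\ge N_{\rm th}$, hence $P_e[\hat b_j=0\,|\,b_j=1,b_{1:j-1}]=\sum_{m=0}^{N_{\rm th}-1}\Pr[N[j]=m\,|\,b_j=1,b_{1:j-1}]$ and $P_e[\hat b_j=1\,|\,b_j=0,b_{1:j-1}]=1-\sum_{m=0}^{N_{\rm th}-1}\Pr[N[j]=m\,|\,b_j=0,b_{1:j-1}]$; substituting both into the total-probability decomposition over $\{b_j=0\}$ and $\{b_j=1\}$ with priors $P_0,P_1$ gives \eqref{overallerror}. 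So the proof reduces to evaluating the $\Phi_a$-averaged low-order Poisson masses $\Pr[N[j]=m]=\tfrac{1}{m!}\E\{\Lambda_j^m e^{-\Lambda_j}\}$ for $m=0,1,\dots,N_{\rm th}-1$.

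To obtain these I would pass to the Laplace functional $\mathcal L(s)\defequal\E\{e^{-s\Lambda_j}\}$. Because $\Phi_a$ is a homogeneous PPP of intensity $\lambda_a$ on $\mathbb R^3\setminus V_{\Omega_{r_r}}$ and the summand depends on $\mathbf x$ only through $r=\|\mathbf x\|$, the probability generating functional of the PPP (equivalently Slivnyak--Mecke, \cite{baccelli2009stochastic}) together with the change to spherical coordinates $\diff\mathbf x=4\pi r^2\diff r$ gives $\mathcal L(s)=\exp\{-4\pi\lambda_a\int_{r_r}^{\infty}(1-e^{-sN_{\rm tx}R(\Omega_{r_r},j\,|\,r)})\,r^2\diff r\}=:e^{-g(s)}$, which at $s=1$ already supplies the exponential prefactor of \eqref{BER_final} and \eqref{BER_final2}. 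Since $N[j]\,|\,\Lambda_j$ is Poisson, $\E\{z^{N[j]}\}=\mathcal L(1-z)$, hence $\Pr[N[j]=m]=\tfrac{(-1)^m}{m!}\,\mathcal L^{(m)}(1)$, so each required mass is an $m$-th derivative of the composite exponential $e^{-g(s)}$ at $s=1$.

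The computational core is that differentiation. One first differentiates under the integral sign --- legitimate because $R(\Omega_{r_r},j\,|\,r)$ has exponentially decaying tails in $r$ (through the $\erfc$, $\erf$ and Gaussian forms of \eqref{net_deg}, \eqref{PS_nonuniform_der_r0}, reinforced by $k_d\ge 0$), so every $\int_{r_r}^{\infty}(N_{\rm tx}R)^k e^{-N_{\rm tx}R}r^2\diff r$ converges and dominated convergence applies --- obtaining $g^{(k)}(1)=(-1)^{k+1}\,4\pi\lambda_a\int_{r_r}^{\infty}\big(N_{\rm tx}R(\Omega_{r_r},j\,|\,r)\big)^{k}e^{-N_{\rm tx}R(\Omega_{r_r},j\,|\,r)}\,r^2\diff r$. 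Applying Fa\`{a} di Bruno's formula (equivalently the complete Bell polynomials) to the $m$-th derivative of $e^{-g(s)}$ at $s=1$ writes $\mathcal L^{(m)}(1)$ as $e^{-g(1)}$ times a sum over all nonnegative integer tuples $(n_1,\dots,n_m)$ obeying $1\!\cdot\!n_1+2\!\cdot\!n_2+\cdots+m\!\cdot\!n_m=m$ of $\frac{m!}{\prod_k n_k!\,(k!)^{n_k}}\prod_k\big(-g^{(k)}(1)\big)^{n_k}$. Dividing by $m!$, summing over $m=0,1,\dots,N_{\rm th}-1$ (the $m=0$ term supplying the leading constant $1$ inside the bracket), and inserting the explicit $g^{(k)}(1)$ reproduces the bracketed series of \eqref{BER_final}; \eqref{BER_final2} then follows by complementation, $P_e[\hat b_j=1\,|\,b_j=0,b_{1:j-1}]=1-\sum_{m=0}^{N_{\rm th}-1}\Pr[N[j]=m\,|\,b_j=0,b_{1:j-1}]$, with $R$ evaluated for the sequence having $b_j=0$.

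I expect the main obstacle to be the combinatorial bookkeeping in this last step: organizing the $m$-fold derivative of the composite exponential into the integer-partition sum with the exact multinomial weights $1/\prod_k n_k!(k!)^{n_k}$, matching the index set to the stated constraint $\sum_k k\,n_k=n$, and consistently tracking the alternating signs carried by the $(-1)^{k+1}$ in $g^{(k)}(1)$. A secondary, routine point is to record that the $\approx$ in the statement is inherited only from the Poisson approximation of the binomial sums in \eqref{net_FA_po} and \eqref{cur_PS_po}: once $N[j]$ is taken to be exactly Poisson given $\Phi_a$, the remaining steps --- generating functional, spherical integration, Fa\`{a} di Bruno --- are exact.
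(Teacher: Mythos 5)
Your proposal is correct and follows essentially the same route as the paper's Appendix D: conditionally Poisson observations given the PPP realization, sub-threshold probabilities expressed through derivatives of the Laplace transform of the shot-noise sum $R_{\rm tot}^j$, that transform evaluated via the PPP probability generating functional (yielding the exponential prefactor), and Fa\`a di Bruno's formula organizing the higher-order derivatives into the integer-partition sum; the only cosmetic difference is that the paper takes the derivatives with respect to an auxiliary scaling variable $\phi x$ evaluated at $x=\phi^{-1}$ rather than through your PGF identity $\E{z^{N[j]}}=\mathcal{L}(1-z)$. On the sign bookkeeping you flag as the main obstacle: carried through exactly, your $(-1)^m$ prefactor cancels the $(-1)^{\sum_k k n_k}$ arising from $\prod_k\bigl(-g^{(k)}(1)\bigr)^{n_k}$, so the partition terms come out with a \emph{positive} $4\pi\lambda_a$ inside the product --- consistent with the cancellation of $(-\phi)^n$ between \eqref{BER2} and \eqref{BER3} --- whereas the minus sign printed inside the brackets of \eqref{BER_final} and \eqref{BER_final2} is an unsimplified typographical artifact (e.g., it would make the $N_{\rm th}=2$ miss probability smaller than $\Pr[N[j]=0]$), so you should not try to reproduce it.
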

\begin{proof}
See Appendix D.
\end{proof}

The results in Eq. \eqref{BER_final} and Eq. \eqref{BER_final2} of Theorem 1 have combinatorial complexity with multiple sums and products.  In order to gain insight on the impact of the system parameters (except  $N_{\rm{th}}$) on the derived bit error probability, 
we  present a simple expression in the following lemma for the $j$th bit error probability when the detection threshold $N_{\rm{th}}$ equals 1.
\begin{lemma}
With $N_{\rm{th}}=1$, the $j$th bit error probability  of the large-scale molecular communication system with molecule degradation   is given by \eqref{overallerror} with
\begin{align} 
	&{P_e}\left[ {{{\hat b}_j} = 0\left| {{b_j} = 1,{b_{1:j - 1}}} \right.} \right] \approx
\nonumber\\ &\hspace{0.2cm}
  \exp \left\{ { - {\lambda _a}\int_{{r_r}}^\infty  {\left( {1 - \exp \left\{ { - {N_{{\rm{tx}}}}R\left( {\left. {{\Omega _{{r_r}}},j} \right|r} \right)} \right\}} \right)} 4\pi {r^2}\diff r} \right\},
 \label{fina}
\end{align}
and
\begin{align} 
	&{P_e}\left[ {{{\hat b}_j} = 1\left| {{b_j} = 0,{b_{1:j - 1}}} \right.} \right] \approx
\nonumber\\ &\hspace{0.2cm}
  1-\exp \left\{ { - {\lambda _a}\int_{{r_r}}^\infty  {\left( {1 - \exp \left\{ { - {N_{{\rm{tx}}}}R\left( {\left. {{\Omega _{{r_r}}},j} \right|r} \right)} \right\}} \right)} 4\pi {r^2}\diff r} \right\}
 \label{fina2}.
\end{align}
 In \eqref{fina} and \eqref{fina2},  $R\left( {\left. {{\Omega _{{r_r}}},j} \right|r} \right)$ is given in \eqref{R_FA} for the absorbing receiver,  and \eqref{R_PS} for the passive receiver, respectively.
\end{lemma}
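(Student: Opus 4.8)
The plan is to obtain Lemma~3 as the degenerate case $N_{\rm th}=1$ of the general bit error probability in \eqref{BER_final}--\eqref{BER_final2}, cross-checked by a direct void-probability computation. Recall from the derivation of \eqref{gen_po} that, conditioned on the transmitter process $\Phi_a$, the demodulation variable $N[j]$ is Poisson with random mean $\Lambda_j=N_{\rm tx}\sum_{x\in\Phi_a}R(\Omega_{r_r},j\mid\|\mathbf{x}\|)$, with $R$ given by \eqref{R_FA} for the absorbing receiver and \eqref{R_PS} for the passive receiver (the Poisson form being the approximation already invoked in \eqref{net_FA_po}/\eqref{cur_PS_po}). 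Under fixed-threshold demodulation the two conditional error events are $\{N[j]\le N_{\rm th}-1\}$ when $b_j=1$ and $\{N[j]\ge N_{\rm th}\}$ when $b_j=0$; with $N_{\rm th}=1$ these become $\{N[j]=0\}$ and $\{N[j]\ge1\}$ respectively. In \eqref{BER_final}--\eqref{BER_final2} this corresponds exactly to the inner sum $\sum_{n=1}^{N_{\rm th}-1}$ running over an empty index set, so the bracketed factor $\big[1+\sum_{n=1}^{N_{\rm th}-1}\cdots\big]$ collapses to $1$; substituting $N_{\rm th}=1$ into \eqref{BER_final} then yields \eqref{fina} and into \eqref{BER_final2} yields \eqref{fina2}.

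For a self-contained derivation I would instead compute $\Pr\{N[j]=0\}$ directly. De-conditioning over $\Phi_a$ and using $\Pr\{N[j]=0\mid\Phi_a\}=e^{-\Lambda_j}=\prod_{x\in\Phi_a}e^{-N_{\rm tx}R(\Omega_{r_r},j\mid\|\mathbf{x}\|)}$, the probability generating functional of the homogeneous PPP $\Phi_a$ on $\mathbb{R}^3\setminus V_{\Omega_{r_r}}$ gives $\Pr\{N[j]=0\}=\exp\!\big\{-\lambda_a\!\int_{\mathbb{R}^3\setminus V_{\Omega_{r_r}}}\!\big(1-e^{-N_{\rm tx}R(\Omega_{r_r},j\mid\|\mathbf{x}\|)}\big)\diff\mathbf{x}\big\}$; since $R$ depends on $\mathbf{x}$ only through $\|\mathbf{x}\|$, converting to spherical coordinates with radial lower limit $r_r$ and volume element $4\pi r^2\diff r$ produces precisely the exponential appearing in \eqref{fina}/\eqref{fina2}. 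The bit-$1$ error probability is $\Pr\{N[j]=0\mid b_j=1,b_{1:j-1}\}$, i.e.\ \eqref{fina}, and the bit-$0$ error probability is $1-\Pr\{N[j]=0\mid b_j=0,b_{1:j-1}\}$, i.e.\ \eqref{fina2}; the only bookkeeping point is that $R(\Omega_{r_r},j\mid r)$ is evaluated with the pertinent value of the current bit $b_j$ (the $i=j$ term of \eqref{R_FA}/\eqref{R_PS} is present when $b_j=1$ and absent when $b_j=0$). Reassembling via the total-probability weighting by $P_1$ and $P_0$ recovers \eqref{overallerror}.

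There is no genuine obstacle here, since Lemma~3 is a specialization of a result already established; the only care required is in the elementary Poisson-tail identities $\Pr\{\mathrm{Poisson}(\Lambda)=0\}=e^{-\Lambda}$ and $\Pr\{\mathrm{Poisson}(\Lambda)\ge1\}=1-e^{-\Lambda}$, together with recognizing that the combinatorial sums in \eqref{BER_final}--\eqref{BER_final2} encode exactly the de-conditioned higher tail masses $\Pr\{N[j]=1\},\dots,\Pr\{N[j]=N_{\rm th}-1\}$, all of which drop out once $N_{\rm th}=1$.
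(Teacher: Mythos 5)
Your proposal is correct and follows essentially the same route as the paper: Appendix E likewise writes the bit-1 error event as the conditional Poisson zero-probability, so it equals the Laplace transform $\mathscr{L}_{R_{\rm tot}^j}(N_{\rm tx})$, which is evaluated exactly as in your probability-generating-functional computation (Eq.~\eqref{lap1}) in spherical coordinates, and the bit-0 case follows as its complement. Your preliminary observation that \eqref{fina}--\eqref{fina2} also drop out of \eqref{BER_final}--\eqref{BER_final2} by letting the empty sum $\sum_{n=1}^{N_{\rm th}-1}$ vanish is a valid cross-check but not a substantively different argument, since Theorem 2 rests on the same Laplace transform.
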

\begin{proof}
See Appendix E.
\end{proof}

To simplify further, we  present the \emph{single} bit error probability (without ISI) of the large-scale molecular communication system \emph{without} molecule degradation at the absorbing receiver with $N_{\rm{th}}=1$ and $k_d =0$ as 
\begin{align} 
	{P_e} \left[ {{{\hat b}_1} = 0\left| {{b_1} = 1} \right.} \right] &\approx \exp \left\{ { - 4\pi{\lambda _a}\int_{{r_r}}^\infty  { {r^2}} } \right.
\nonumber\\ & \hspace{-1.7cm}
 \left. {\Big(1 - \exp \Big\{  - {N_{{\rm{tx}}}}\frac{{{r_r}}}{r}{\rm{erfc}}\Big\{ \frac{{r - {r_r}}}{{\sqrt {4D{T_b}} }}\Big\} \Big\} \Big)\diff r} \right\}.
 \label{fina_ab}
\end{align}

We see that the single bit error probability of the absorbing receiver improves by increasing the diffusion coefficient, the number of transmit molecules, or  the density of  transmitters.  This is because with a single bit-1 transmitted at all the  transmitters, no ISI needs to be considered and so a higher  peak value of net number of absorbed molecules  results in a   better  bit error probability.

\begin{figure}[!tb]
	\centering
	\includegraphics[width=\linewidth]{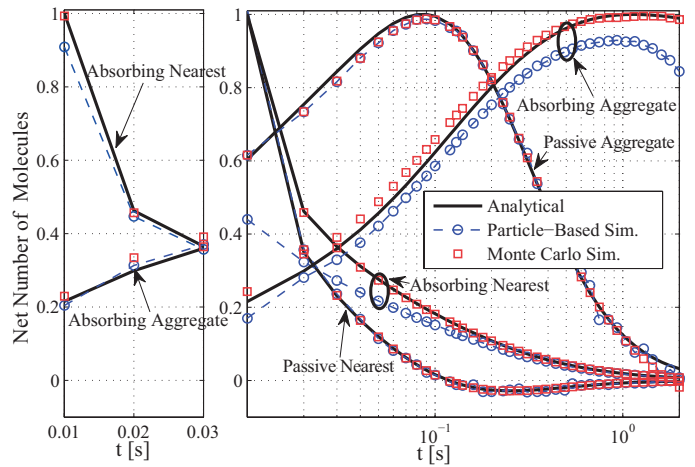}
	\caption{Net number of observed molecules at the receiver as a function of time. All curves are scaled by the maximum value of the analytical curves in the right subplot.}
	\label{fig_accord_normalized}
\end{figure}

\section{Numerical and Simulation Results}

{ Throughout this section, we focus on Monte Carlo approaches to conduct the simulations, which we also compare with particle-based simulations. We consider two types of Monte Carlo simulations. Both types use a HPPP to generate the locations of the transmitters. In the first type, which we use in Figs. 2--5, observations in each realization are ``simulated'' by adding the \emph{expected} observation from every transmitter at the sampling time in \eqref{FA_uo} and \eqref{PS_all}. In the second type, which we use in Figs. 6--9, observations in each realization are ``simulated'' by \emph{drawing from the Poisson distribution} as in \eqref{net_FA_po}  and \eqref{cur_PS_po},  whose mean is the sum of the observations expected from every transmitter at the sampling time. The second type generates distributions of individual observations in order to measure the bit error probability. }

{In  this section, we first validate the Monte Carlo approaches by comparing with particle-based simulations and our analytical results for the net number of molecules at the receiver. Due to the extensive computational demands to simulate large molecular communication environments with a particle-based approach, we then rely on Monte Carlo simulations for further verification of the channel impulse responses and the bit error performance.} In all figures of this section,  we set  $r_r = 5\,\micron$. 
\begin{table}[!tb]
	\centering
	\caption{ The simulation parameters and scaling values applied in Fig.~\ref{fig_accord_normalized}.}
	{\renewcommand{\arraystretch}{0.9}
		\begin{tabular}{|c|c|c|c|c|}
			\hline
			 Transmitter & Receiver & Realizations & Time  & Scaling \\
			&  &  & Step [s] &Value \\ \hline
			Nearest & Passive & $10^4$ & $10^{-2}$ & 149.57 \\ \hline
			Nearest & Absorbing & $10^4$ & $10^{-2}$ & 354.52 \\ \hline
			Aggregate & Passive & $10^4$ & $10^{-2}$ & 9.252 \\ \hline
			Aggregate & Absorbing  & $10^3$ & $10^{-3}$ & 59.42 \\ \hline
		\end{tabular}
	}
	\label{table_accord}
\end{table}
%

In Figs.~\ref{fig_accord_normalized}, \ref{Fig2},  and \ref{Fig3}, we set $N_{\rm{tx}}=10^4$, and  $k_d = 0$ to focus on  normal diffusion without molecule degradation in a large-scale  system. The analytical curves of the expected  number of   molecules  absorbed at the absorbing receiver 
due to all  transmitters,  the nearest transmitter, and the other  transmitters are plotted using  \eqref{Total_FA_final}, \eqref{E_u_FA_PS},  and \eqref{E_I_PS}, and are abbreviated as “Absorbing All”, “Absorbing Nearest”, and “Absorbing Aggregate”, respectively. The analytical curves of the expected number of  molecules observed inside   the passive receiver due to all  transmitters, the nearest  transmitter, the other   transmitters  are plotted  using \eqref{Total_PS_final_t}, \eqref{E_u_FA_PS},  and \eqref{E_I_PS}, and are abbreviated as “Passive All”, “Passive Nearest”, and “Passive Aggregate”, respectively. The analytical curves and the simulations are occasionally abbreviated as “Anal.” and “Sim.”, respectively.  

\subsection{Validation of Simulation Approaches}

{ The Monte Carlo approaches assume  that the channel response for a single transmitter is correct. We check this assumption by comparing the first Monte Carlo approach with particle-based simulations generated using the simulation algorithm in \cite{Yansha16} and the AcCoRD simulator (Actor-based Communication via Reaction-Diffusion) \cite{Noel2017}. In the first Monte Carlo approach, every realization is simulated by calculating the net number of molecules due to each transmitter using \eqref{FA_uo} and \eqref{PS_all} for the absorbing and passive receivers, respectively. In the particle-based approach, observations in each realization are ``simulated'' by placing individual molecules at each transmitter, moving each molecule by Brownian motion, and checking whether each molecule diffused into the passive receiver or was absorbed by the absorbing receiver. AcCoRD simulations are defined by configuration files; here, each configuration file listed the transmitter locations as specified by the current permutation of the HPPP, and each transmitter permutation was simulated at least 10 times.}

The simulation approaches are compared in Fig.  \ref{fig_accord_normalized}, where we set  $D = 80\times10^{-12} \frac{\meter^2}{\second}$ and assume that the transmitters are placed up to $R_a =50\mu$m from the center of the receiver at a density of $\lambda_a =10^{-4}$ transmitters per $\mu$m$^3$ (i.e., 52 transmitters on average, including the exclusion of the receiver volume). The receiver takes samples every $T_{ss} = 0.01$ s and calculates the \emph{net change} in the number of observed molecules between samples. The default simulation time step is also $0.01$ s. Unless otherwise noted, all simulation results were averaged over $10^4$ transmitter location permutations, as shown in Table II.


In Fig. \ref{fig_accord_normalized}, we verify the analytical expressions for the expected net number of molecules observed during $[t,t+T_{ss}]$ at both receivers in  \eqref{E_u_FA_PS},  and  \eqref{E_I_PS}  by comparing with the particle-based simulations and the Monte Carlo simulations. 
In the right subplot of Fig.~\ref{fig_accord_normalized}, we compare passive and absorbing receivers and observe the expected  net number of observed molecules during $[t,t+T_{ss}]$ due to the nearest transmitter and due to the other   transmitters. In the left subplot of Fig.~\ref{fig_accord_normalized}, we lower the simulation time step  to $10^{-4}\,\second$ for the first few samples of the two absorbing receiver cases, in order to demonstrate the corresponding improvement in accuracy\footnote{ Only the data points at intervals of 10$^{-2}$s are presented   in the left subplot of Fig. \ref{fig_accord_normalized} to avoid crowded  markers.}. All curves in both subplots are scaled by the maximum value of the corresponding analytical curve in the right subplot; the scaling values and other simulation parameters are summarized in Table~\ref{table_accord}.

\subsubsection{Particle-Based  Simulation Validation}
Overall, there is good agreement between the analytical curves and the particle-based simulations in the right subplot of Fig.~\ref{fig_accord_normalized}. The analytical results for the net number of molecules observed  inside the passive receiver  during $[t,t+T_{ss}]$ due to  the nearest transmitter are highly accurate, and even captures the net loss of molecules observed after $t = 0.1\,\second$. { The particle-based simulation of the ``Passive Aggregate'' case also becomes noisier with increasing $t$ as the normalized net number of molecules goes below 0.3, which is due to the very low number of molecules observed (the scaling factor in this case is only 9.525; see Table~\ref{table_accord}) and can be improved by averaging over more realizations.}
{ Both simulation approaches slightly underestimate the analytical curve in the ``Passive Aggregate'' case for $t < 0.1$ s, due to the constraint on the placement of transmitters to within a radius of $R_a$ = 50 $\mu$m (which we relax in later figures once we do not include particle-based simulations).}

\begin{figure}[!tb]
	\centering
	\includegraphics[width=\linewidth]{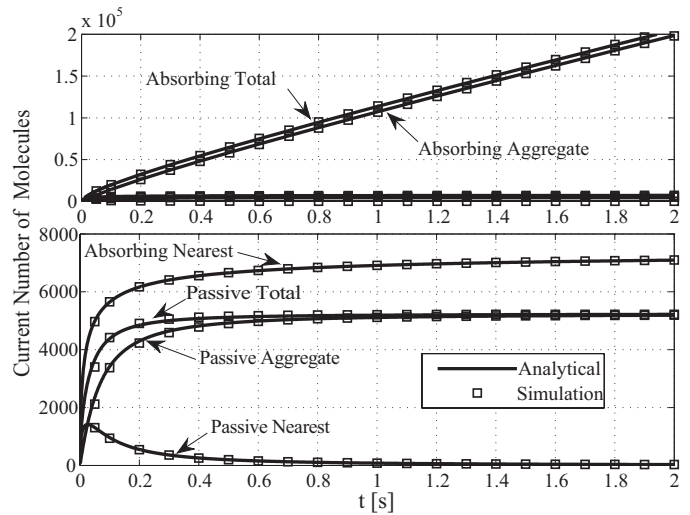}
	\caption{Expected number of  molecules observed at the receiver  as a function of time. }
	\label{Fig2}
\end{figure} 

 There is less agreement between the particle-based simulations and the analytical expressions for the absorbing receiver, and this is primarily due to the large simulation time step (even though we used a smaller time step for the aggregate transmitter case in the right subplot; see Table II). To demonstrate the impact of the time step, the left subplot shows much better agreement for the absorbing receiver model by lowering the time step to $10^{-4}\,\second$. This improvement is especially true in the case of the nearest transmitter, as there is significant deviation between the particle-based simulation and the analytical expression for very early times in the right subplot.


\subsubsection{Monte Carlo Simulation Validation}
There is a good match between the analytical curves and the Monte Carlo simulations for the  net number of  molecules observed at both types of receivers during $[t,t+T_{ss}]$ due to the nearest transmitter, which can be attributed to the large number of  molecules  (as shown in Table~\ref{table_accord}) and  the small value of the shortest distance between the transmitter and the receiver  compared with $R_a = 50\,\micron$. There is slight deviation in the Monte Carlo simulations for the expected number of  molecules observed at both types of receivers due to the other transmitters, and this is primarily due to
  the restricted placement of transmitters to the maximum distance $R_a = 50\,\micron$.  In Figs. \ref{Fig2} and \ref{Fig3},  better agreement between the analytical curves and Monte Carlo simulation is achieved by increasing  the maximum placement distance  $R_a$.

\begin{figure}[!tb]
	\centering
	\includegraphics[width=\linewidth]{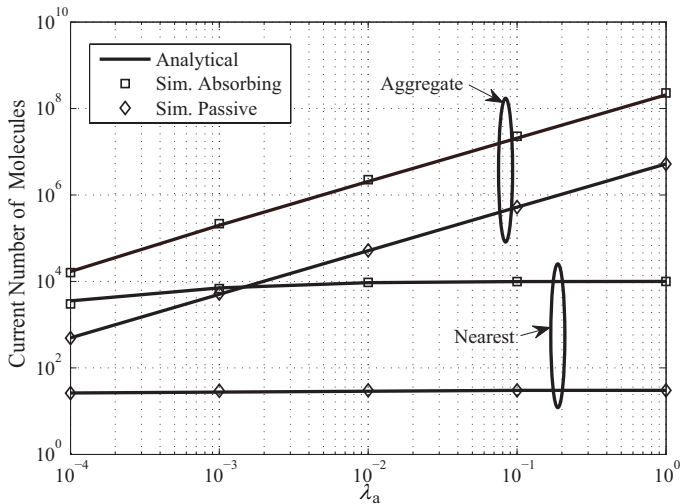}
	\caption{Expected number of  molecules observed at the receiver at time $t=2\;\;\second$ as a function of the density of transmitters. }
	\label{Fig3}
\end{figure}

Due to the extensive computational demands to simulate such large molecular communication environments, we assume that the particle-based simulations have sufficiently verified the analytical models. The remaining simulation results in the rest of the figures  are only  generated via Monte Carlo simulation.

\subsection{Channel Impulse Response Evaluation}
From Fig.~\ref{fig_accord_normalized} and  the scaling values in Table II,  we see that the expected net number of  molecules observed at the absorbing receiver  is much larger than that inside the passive receiver, since every molecule arriving at the absorbing receiver is permanently absorbed.  We also notice that  the expected  net number of observed molecules due to the nearest transmitter  is much larger than that due to the  other transmitters, which may be due to a relatively low transmitter density.


Figs.~\ref{Fig2} and \ref{Fig3} plot the expected number of  molecules currently observed at the absorbing receiver and the passive receiver at time $t$ rather than their net change during each sampling interval.
In Figs.~\ref{Fig2} and \ref{Fig3}, we set the parameters: $D = 120\times10^{-12} \frac{\meter^2}{\second}$, $R_a = 100\,\micron$, and $T_{ss} = 0.1\; \second$.
 We set the density of  transmitters as $\lambda_a = 10^{-3} /\micron^{3}$.  
As shown in the lower subplot of  Fig.~\ref{Fig2},  even though the point transmitters have random locations,
the   channel responses of the receivers due to the nearest transmitter in this large-scale molecular communication system  are consistent with those observed at the absorbing receiver in   \cite[Fig. 4]{Yansha16} and the passive receiver in  \cite[Fig. 2]{Llatser13} and \cite[Fig. 1]{noel2014improving}   for a point-to-point molecular communication system.

\begin{figure}[!tb]
	\centering
	\includegraphics[width=\linewidth]{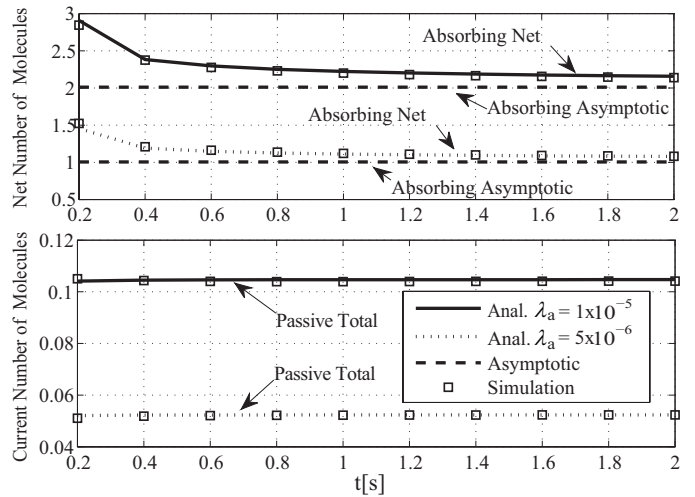}
	\caption{Expected net number of  molecules observed at the receiver  as a function of time. }
	\label{Fig4}
\end{figure} 
 
In Fig.~\ref{Fig2}, we notice that the  expected   number of  molecules currently observed at time $t$ due to all  transmitters is dominated by the other  transmitters, rather than the nearest transmitter, which is due to the increased number of molecules received from the other  transmitters with the higher density of transmitters compared to that in  Fig.~\ref{fig_accord_normalized}.
Furthermore, as we might expect, the  expected  number of  molecules currently observed inside the passive receiver at time $t$ stabilizes after $t=0.8\,\second$, whereas that at the absorbing receiver eventually increases linearly with increasing time. This reveals the potential differences in appropriate demodulation  design for these two types of receiver. More specifically, unlike the demodulation for passive receiver, demodulation using the  number of  molecules currently absorbed by the absorbing receiver  is not a suitable design, since it cannot have a single optimal threshold.

Fig.~\ref{Fig3}  plots the expected number of  molecules observed at the absorbing receiver and the passive receiver at $t=2\,$s versus the density of  transmitters $\lambda_a$.   With the increase of $\lambda_a$, 
 the  number of observed molecules due to the other  transmitters increases,  whereas the  number of observed molecules due to the nearest transmitter remains almost unchanged. More importantly, the dominant effect of   the other transmitters on the number of observed molecules  becomes more obvious as 
  $\lambda_a$ increases. 

\subsection{Demodulation Criteria and Single Bit Error Performance}

From Figs.~\ref{Fig2} and~\ref{Fig3},  the \emph{current} number of absorbed molecules increases with increasing time and transmitter density, thus demodulation based on the \emph{current} number of  molecules absorbed by the absorbing receiver will require an increasing demodulation threshold for larger $t$ and $\lambda_a$.  Hence, in our model, the demodulation of the absorbing receiver is based on the \emph{net} number of absorbed molecules, whereas the demodulation of the passive receiver is based on the \emph{current} number of  molecules  observed at the receiver.  In Figs. \ref{Fig4} and \ref{Fig5}, we set $N_{\rm{tx}} = 20$, $k_d = 0$, $T_b =0.2$ s,  $R_a = 100\,\micron$, $D = 80\times10^{-11} \frac{\meter^2}{\second}$, and with only a single bit-1 transmitted at $t=0$, i.e.,  the transmit bit sequence is [1 0 0 0 $ \ldots$ ].

\begin{figure}[!tb]
	\centering
	\includegraphics[width=\linewidth]{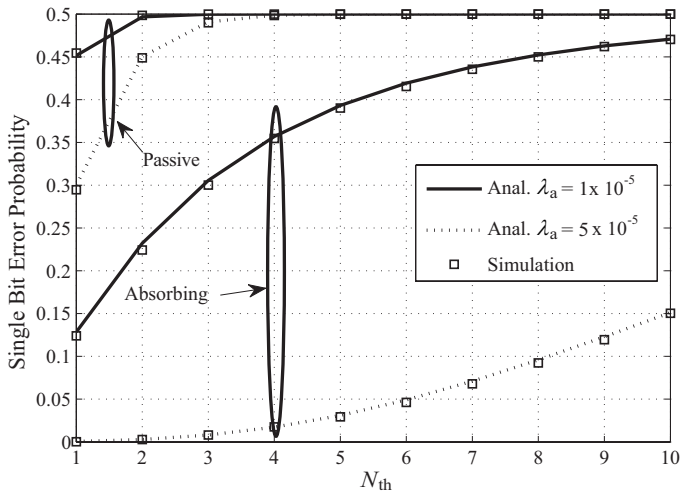}
	\caption{Single bit error probability as a function of threshold. }
	\label{Fig5}
\end{figure}

 Fig. \ref{Fig4} plots the net number of   molecules absorbed by the absorbing receiver during one bit interval $T_b$ in the upper subfigure, and the number of observed molecules at the passive receiver at the end of each bit interval $T_b$ in the lower subfigure, each with different transmitter densities. We  also plot the asymptotic net number of absorbed  molecules using \eqref{Total_FA_final_asy} with a dashed line. 
We see that the  net number of molecules absorbed by the absorbing receiver during each bit interval decreases as  time increases, and converges to the asymptotic value.
The number of observed molecules inside the passive receiver at the end of  every bit interval remains comparable as  time increases, which suggests that  taking multiple samples of the number of observed molecules at different times in one bit interval may not greatly improve the detection reliability.
  For both receivers, the ISI is not small compared with the observation in the first bit interval,  which demonstrates the  high ISI in the large-scale  molecular communication system.

In Fig. \ref{Fig5},  we start using the second Monte Carlo approach for simulations in order to generate distributions of observations, and we plot the single bit error probability of both receivers using \eqref{overallerror}, in order to focus on the impact of multiple transmitters with no ISI impairment.  We notice that the single bit error probability at both receivers improves with increasing $\lambda_a$, which is due to the increased number of molecules absorbed by the absorbing receiver  during $t \in [0, T_b]$, and the increased number of observed molecules inside the passive receiver at $t=T_b$ as seen in Fig. \ref{Fig4}. Another interesting observation is that the single bit error probability of the passive receiver is much worse than that of the absorbing receiver, which is due to the lower number of observed molecules at the passive receiver than that at the absorbing receiver. 
Clearly, the two receivers need different demodulation thresholds.
\begin{figure}[!tb]
	\centering
	\vspace{-0.12cm}
	\includegraphics[width=\linewidth]{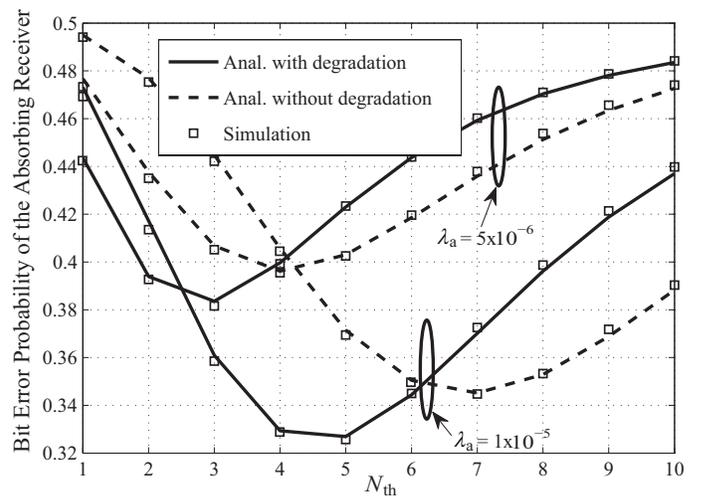}
	\caption{Bit error probability of the absorbing  receiver as a function of threshold. }
	\label{Fig6}
\end{figure}

\subsection{Multiple Bits Error Performance }

 Figs. \ref{Fig6} and  \ref{Fig7}  plot the bit error probabilities of the absorbing receiver and that of the passive receiver in the proposed large-scale molecular communication system, respectively, both 
 with $(k_d=0.8$ ${\second}^{-1})$ or without $(k_d=0$ ${\second}^{-1})$ molecule degradation.
Fig. \ref{Fig8}  compares the bit error probabilities of the absorbing receiver and the passive receiver in the proposed large-scale molecular communication system under molecule degradation $(k_d=0.8$ ${\second}^{-1})$  using DFD, with that  using the simple detector. 
  In Figs. \ref{Fig6},  \ref{Fig7}, and \ref{Fig8}, we set the parameters: $T_b =0.2$ s,  $R_a = 100\,\micron$, and $D = 80\times10^{-11} \frac{\meter^2}{\second}$ with a 5 bit sequence  transmitted by all transmitters, where the first four bits are set as [1 0 1 0] .
 We set $N_{\rm{tx}} =20$ in Fig. \ref{Fig6}, $N_{\rm{tx}} =300$ in Fig. \ref{Fig7}, and $N_{\rm{tx}} =10^4$ in Fig. \ref{Fig8}.

  In Figs. \ref{Fig6} and \ref{Fig7},  we see a good match between the analytical results in  \eqref{overallerror}  and the simulations, which demonstrates the correctness of our derivations. We observe that the minimum bit error probability  improves with increasing the density of the  transmitters.
  We also see that the minimum bit error probability  can be improved by introducing   molecule degradation. This can be explained by the fact that many molecules, especially those released far from the receiver, degrade before they reach the receiver, and this reduces the ISI effect.  However, the bit error probability with molecule degradation is not always better than without degradation for a given decision threshold, which can be attributed to the fact that the degradation not only reduces the ISI, but also lowers the strength of the intended signal.
  
    In both figures, we notice that the minimum bit error probability is still not low enough for reliable transmission, even though it can be potentially improved by increasing $N_{\rm{tx}}$. This is because with multiple transmitted bits, 
the ISI will accumulate and  keep growing with every transmit bit-1.  These observations reveal   that the demodulation threshold at each bit should increase with the number of transmit bits, instead of being fixed.
\begin{figure}[!tb]
	\centering
	\includegraphics[width=\linewidth]{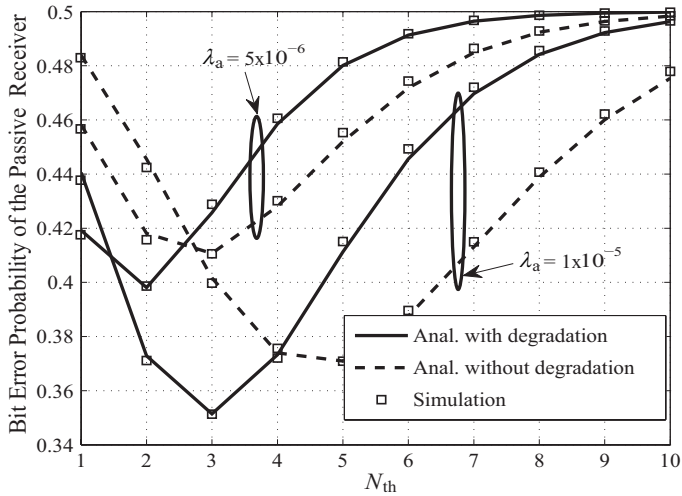}
	\caption{Bit error probability of the passive  receiver as a function of threshold. }
	\label{Fig7}
\end{figure} 
\begin{figure}[!tb]
	\centering
	\includegraphics[width=\linewidth]{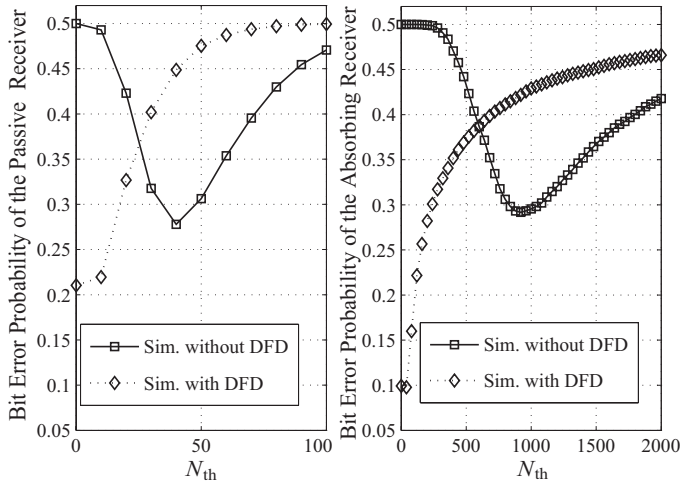}
	\caption{Bit error probability of  receivers as a function of threshold. }
	\label{Fig8}
\end{figure} 

We now consider the DFD  at both receivers to show its potential benefits in improving the bit error probability.
Fig. \ref{Fig8}  compares the bit error probability of both receivers having molecule degradation during diffusion and DFD during detection with that without DFD during detection using Monte Carlo simulation, where the passive receiver is capable of  subtracting the  \emph{current} observation in one previous bit interval $ N\left[ j-1 \right]$ from that in the current bit interval $ N\left[ j \right]$,  and the absorbing receiver is  capable of  subtracting the  \emph{net} observation in one previous bit interval $ N\left[ j-1 \right]$ from that in the current bit interval $ N\left[ j \right]$ for the demodulation of the $j$th bit.  With DFD, the $j$th bit is decoded based on if $ N\left[ j \right]- N\left[ j-1 \right]>N_{\rm{th}}$ or not.
By doing so, the accumulated ISI due to the previous bits is mitigated artificially during the demodulation process.  We set $\lambda_a = 5\times10^{-6}  /\micron^{3}$.
With the help of DFD, we see that the minimum  bit error probability of both receivers can be  improved for the proposed system.

\section{Conclusions and Future Work}
In this paper, we provided a general model for the collective signal  modelling in a large-scale molecular communication system with or without degradation using stochastic geometry.
The  collective signal strength at a fully absorbing receiver and a passive receiver is modelled and explicitly characterized. We derived  tractable  expressions for the expected  number of observed molecules at the fully absorbing receiver and the passive receiver, which were shown to increase with transmitter density.
We also  derived  analytical expressions for the  bit error probabilities at both receivers with a simple detector taking one sample per bit, and the minimum bit error probabilities  were shown to improve with the help of degradation. 
 The analytical model presented in this paper can also be applied for the performance evaluation of other types of receiver (e.g., partially absorbing, reversible adsorption receiver, ligand-binding receiver)   in a large-scale  molecular communication system by substituting its corresponding channel response.

\appendices
\numberwithin{equation}{section}

\section{ Proof of Proposition 1}\label{proposition1}
According to \cite{Haenggi05}, the probability of finding $k$ nodes in a bounded Borel $A \subset {{\mathbb{R}}^m}$ in a homogeneous $m$-dimensional Poisson point process of intensity $\lambda$ is given by
\begin{align}
\Pr \left( {M = k} \right) = {e^{ - \lambda_a \mu \left( A \right)}}\frac{{{{\left( {\lambda_a \mu \left( A \right)} \right)}^k}}}{{k!}},
\end{align}
where $M$ is the Poisson random variable,  and $\mu \left( A \right)$ is the standard Lebesgue measure of $A$.

Thus, the probability of finding zero nodes in a bounded Borel $A \subset {{\mathbb{R}}^3}$ in a homogeneous $3$D Poisson point process of intensity $\lambda_a$ is obtained as
\begin{align}
\Pr \left( {M = 0} \right) = {e^{ - \lambda_a \mu \left( A \right)}},
\end{align}
where  $\mu \left( A \right)= {\frac{4}{3}\pi {x^3} - \frac{4}{3}\pi {r_r}^3} $, and $x$ is the radius of the bounded ball.

Using ${f_{\left\| {{x^*}} \right\|}}(x) =  - \frac{{d\Pr \left( {N = 0} \right)}}{{dx}}$, we prove \eqref{shortest_dist}.

\section{  Proof of Theorem 1}\label{theorem1}
Based on \eqref{total_FA} and \eqref{PS_all}, we can write the expected net number of  molecules observed at the  receiver as
\begin{align}
{\mathbb{E}}&\left\{ N_{{\rm{all}}}\left( { {{\Omega _{{r_r}}},t,t + {T_{ss}}} } \right) \right\}  =  {\mathbb{E}}\Big\{ \sum\limits_{x \in\Phi_a } {{N_{\rm{tx}}}\Phi \left( r \right)}  \Big\}, \label{proof1_1_a}
\end{align}
where
\begin{align}
\Phi \left( r \right) = {F^{{\rm{FA}}}}\left( {\left. {{\Omega _{{r_r}}},t,t + {T_{ss}}} \right|r} \right), \label{phi_FA_a}
\end{align}
 for the absorbing receiver, and
\begin{align}
\Phi \left( r \right) = {F^{{\rm{PS}}}}\left( {\left. {{\Omega _{{r_r}}},t + {T_{ss}}} \right|r} \right) - {F^{{\rm{PS}}}}\left( {\left. {{\Omega _{{r_r}}},t} \right|r} \right), \label{phi_PS_a}
\end{align}
for the passive receiver.

According to the  Campbell’s theorem in 3D space, the mean of the random sum of a  point  process $\Phi_a$ on ${\mathbb{R}}^3$ and ${{N_{\rm{tx}}}\Phi \left( r \right)}$ is given as \cite[Eq. (1.18)]{baccelli2009stochastic}
\begin{align}
{\mathbb{E}}\left\{ N_{{\rm{all}}}\left( { {{\Omega _{{r_r}}},t,t + {T_{b}}} } \right) \right\}  &=  \int_{{{\mathbb{R}}^3}} {\left[ {N_{{\rm{tx}}}\Phi \left( r \right)} \right]{\lambda _a}\diff x} 
\nonumber\\& \hspace{-0.5cm}= {\lambda _a}\int_{{r_r}}^\infty  {\left[ {N_{{\rm{tx}}}\Phi \left( r \right)} \right]3\frac{{4\pi }}{3}{r^2}\diff  r}  . \label{proof1_2}
\end{align}
 
Thus, we derive 
\begin{align}
{\mathbb{E}}&\left\{ N_{{\rm{all}}}\left( { {{\Omega _{{r_r}}},t,t + {T_{ss}}} } \right) \right\}  =  4\pi {\lambda _a} {N_{\rm{tx}}^{\rm{FA}}}
\int_{{r_r}}^\infty  {\Phi  \left( r \right)} {r^2}\diff r.
 \label{proof1}
\end{align}

\section{  Proof of Lemma 1}\label{lemma1}
With $k_d=0$, we  rewrite  \eqref{proof1} using $z=r-r_r$ as
\begin{align}
&  {\mathbb{E}}\left\{ N_{{\rm{all}}}^{{\rm{FA}}}\left( { {{\Omega _{{r_r}}},t,t + {T_{ss}}} } \right) \right\}  =
\nonumber\\  &  \hspace{0.15cm} \frac{{\sqrt {4\pi } {\lambda _a}{{N_{\rm{tx}}}}{r_r}}}{{\sqrt D }}\int_t^{t + {T_{ss}}} {\int_0^\infty  {z\left( {z + {r_r}} \right)\exp \big( { - \frac{{{z^2}}}{{4Dx}}} \big)\diff z} } \frac{1}{{\sqrt {{x^3}} }}\diff x
\nonumber\\  &  \hspace{0.15cm} = \frac{{\sqrt {4\pi } {\lambda _a}{N_{\rm{tx}}}{r_r}}}{{\sqrt D }}\Bigg[ {\int_t^{t + {T_{ss}}} {\int_0^\infty  {{z^2}\exp \big( { - \frac{{{z^2}}}{{4Dx}}} \big)\diff z} } \frac{1}{{\sqrt {{x^3}} }}\diff x} \Bigg.
\nonumber\\  &  \hspace{0.2cm} \Bigg. { + {r_r}\int_t^{t + {T_{ss}}} {\int_0^\infty  {z\exp \big( { - \frac{{{z^2}}}{{4Dx}}} \big)\diff z} } \frac{1}{{\sqrt {{x^3}} }}\diff x} \Bigg],
 \label{proof21}
\end{align}

With  mathematical manipulations, we simplify \eqref{proof21}  as
\begin{align}
&  {\mathbb{E}}\left\{ N_{{\rm{all}}}^{{\rm{FA}}}\left( { {{\Omega _{{r_r}}},t,t + {T_{ss}}} } \right) \right\}  
\nonumber\\& \hspace{0.15cm} = 4\sqrt \pi  {\lambda _a} {N_{\rm{tx}}}{r_r}\Bigg[ {D\sqrt \pi  \int_t^{t + {T_{ss}}} {\diff x}  + \sqrt D {r_r}\int_t^{t + {T_{ss}}} {\frac{1}{{\sqrt x }}\diff x} } \Bigg]. \label{proof2}
\end{align}

Solving \eqref{proof2}, we prove Lemma 1.

\section{  Proof of Theorem 2}\label{theorem2}	
	
	Based on the fact that 
\begin{align} 
	{\left. {\frac{{{\partial ^n}\left( {\exp \left\{ { - N_{{\rm{tx}}}\phi x\tau } \right\}} \right)}}{{\partial {x^n}}}} \right|_{x = {\phi ^{ - 1}}}} = \exp \left\{ { - N_{{\rm{tx}}}\tau } \right\}{\left( { - N_{{\rm{tx}}}\phi \tau } \right)^n},
\end{align}
we rewrite  the error probability for the  transmit bit-1 signal in the $j$th bit as
\begin{align} 
	&{P_e}\left[ {\hat b_{j} = 0\left| {b_j = 1} \right.} \right] = \int_0^\infty  {\exp \left\{ { - {N_{{\rm{tx}}}}\tau } \right\}} {f_{R_{\rm{tot}}^j}}\left( \tau  \right)\diff \tau   +
	\nonumber\\ &  \sum\limits_{n = 1}^{{N_{{\rm{th}}}} - 1} {\frac{1}{{{{\left( { - \phi } \right)}^n}n!}}\int_0^\infty  {{{\left. {\frac{{{\partial ^n}\left( {\exp \left\{ { - {N_{{\rm{tx}}}}\phi x\tau } \right\}} \right)}}{{\partial {x^n}}}} \right|}_{x = {\phi ^{ - 1}}}}} {{f_{R_{\rm{tot}}^j}}\left( \tau  \right)\diff \tau }} 
	\nonumber \\&  = {\mathscr{L}_{R_{\rm{tot}}^j}}\left( {{N_{{\rm{tx}}}}} \right) + {\sum\limits_{n = 1}^{{N_{{\rm{th}}}} - 1} {\frac{1}{{{{\left( { - \phi } \right)}^n}n!}}\left. {\frac{{{\partial ^n}\left[ {{\mathscr{L}_{R_{\rm{tot}}^j}}\left( {{N_{{\rm{tx}}}}\phi x} \right)} \right]}}{{\partial {x^n}}}} \right|} _{_{x = {\phi ^{ - 1}}}}}
, \label{BER2}
\end{align}
where ${{f_{R_{\rm{tot}}^j}}\left( \tau  \right) }$ is the PDF of ${R_{\rm{tot}}^j}$, and ${\mathscr{L}_{R_{\rm{tot}}^j}} \left(  \cdot  \right)$ is the Laplace transform of  ${R_{\rm{tot}}^j}$.

According to \eqref{R_tot_j}, the Laplace transform of ${R_{\rm{tot}}^j}$  can be represented as 
\begin{align} 
	{\mathscr{L}_{R_{\rm{tot}}^j}} \left(  s  \right)= &{\mathbb{E}}\Big[ {\exp \Big\{ { - s{\sum _{{\Phi _a}}}R\left( {\left. {{\Omega _{{r_r}}},j} \right|\left\| {\bf{x}} \right\|} \right)} \Big\}} \Big]
	\nonumber\\ =& {\mathbb{E}}\Big[ {\prod\limits_{{\Phi _a}} {\exp \left\{ { - sR\left( {\left. {{\Omega _{{r_r}}},j} \right|\left\| {\bf{x}} \right\|} \right)} \right\}} } \Big]
\nonumber\\ &\hspace{-1.2cm} =\exp \left\{ { - {\lambda _a}\int_{{{\mathbb{R}}^3}} {\left( {1 - \exp \left\{ { - sR\left( {\left. {{\Omega _{{r_r}}},j} \right|\left\| {\bf{x}} \right\|} \right)} \right\}} \right)\diff\left\| {\bf{x}} \right\|} } \right\}
\nonumber\\ &\hspace{-1.2cm}
 = \exp \left\{ { - {\lambda _a}\int_{{r_r}}^\infty  {\left( {1 - \exp \left\{ { - sR\left( {\left. {{\Omega _{{r_r}}},j} \right|r} \right)} \right\}} \right)} 4\pi {r^2}\diff r} \right\}.
 \label{lap1}
\end{align}

Based on \eqref{lap1} and the Fa$\grave{\text{a}}$
 di Bruno's formula \cite{roman1980formula}, we derive
\begin{align} 
	&{\left. {\frac{{{\partial ^n}\left[ {{\mathscr{L}_{R_{\rm{tot}}^j}}\left( {{N_{{\rm{tx}}}}\phi x} \right)} \right]}}{{\partial {x^n}}}} \right|} _{_{x = {\phi ^{ - 1}}}} =
	\nonumber \\  & \hspace{0.2cm}\exp \left\{ { - {\lambda _a}\int_{{r_r}}^\infty  {\left( {1 - \exp \left\{ { - {N_{{\rm{tx}}}}R\left( {\left. {{\Omega _{{r_r}}},j} \right|r} \right)} \right\}} \right)} 4\pi {r^2}\diff r} \right\}
	\nonumber \\& \hspace{0.2cm}
	\times{\sum^n} {\frac{{n!}}{{\prod\limits_{k = 1}^n {{n_k}!k{!^{{n_k}}}} }}} \prod\limits_{k = 1}^n {\Big[ { - {\lambda _a}\int_{{r_r}}^\infty  {\big[ { - {{\left( { - {N_{{\rm{tx}}}}\phi R\left( {\left. {{\Omega _{{r_r}}},j} \right|r} \right)} \right)}^k}} \big.} } \Big.} 
	\nonumber \\&  \hspace{0.2cm} \times{\Big. {\big. {\exp \left\{ { - {N_{{\rm{tx}}}}R\left( {\left. {{\Omega _{{r_r}}},j} \right|r} \right)} \right\}} \Big]4\pi {r^2}\diff r} \Big]^{{n_k}}}
, \label{BER3}
\end{align}
where the summation $\sum\limits^n $ is over all n-tuples of nonegative integers ($n_1,...,n_n$) satisfying the constraint $1 \cdot {n_1} + 2 \cdot {n_2} + \cdots + k \cdot {n_k} +  \cdots +n \cdot {n_n} = n$.
Noting that $\prod\limits_{k = 1}^n {{{\left( { - \phi } \right)}^{k{n_k}}} = } {\left( { - \phi } \right)^n}$, and substituting   \eqref{lap1} and \eqref{BER3} into \eqref{BER2}, we finally derive \eqref{BER_final}. We can follow a similar method to  derive \eqref{BER_final2}.

\section{  Proof of Lemma 3}\label{lemma3}
With fixed threshold-based demodulation, the error probability  with the  transmit bit-1 signal in the $j$th bit is represented as 
\begin{align}
&{P_e}\left[ {{{\hat b}_j} = 0\left| {{b_j} = 1,{b_{1:j - 1}}} \right.} \right] = \Pr \left[ {N\left[ j \right] < 1} \right]
\nonumber\\  &\approx     {{\mathbb{E}}_{R_{\rm{tot}}^j}}\left\{ {\Pr \left[ {\left. {{P}\left( {{N_{{\rm{tx}}}}R_{\rm{tot}}^j} \right) < 1} \right|R_{\rm{tot}}^j} \right]} \right\}
\nonumber\\  &
 = {{\mathbb{E}}_{R_{\rm{tot}}^j}}\left\{ {\sum\limits_{n = 0}^{0} {\frac{1}{{n!}}\exp \left\{ { - {N_{{\rm{tx}}}}R_{\rm{tot}}^j} \right\}{{\left( {{N_{{\rm{tx}}}}R_{\rm{tot}}^j} \right)}^n}} } \right\}
\nonumber\\  & =
{\mathscr{L}_{R_{\rm{tot}}^j}} \left( {N_{{\rm{tx}}}}  \right),   \label{der55}
\end{align} 
where
\begin{align}
R_{\rm{tot}}^j = \sum\limits_{x \in {\Phi _a}} {R\left( {\left. {{\Omega _{{r_r}}},j} \right|\left\| {\bf{x}} \right\|} \right)} ,  \label{R_tot_j}
\end{align}
	with ${R\left( {\left. {{\Omega _{{r_r}}},j} \right|\left\| {\bf{x}} \right\|} \right)}$ given in \eqref{R_FA} for the absorbing receiver and in \eqref{R_PS} for the passive receiver.
	Substituting \eqref{lap1} into \eqref{der55}, we derive 	  \eqref{fina}.
We can follow a similar method to  derive \eqref{fina2}.

With the threshold-based demodulation, the error probability  for a  transmit bit-1 signal in the $j$th bit is represented as 
\begin{align}
&{P_e}\left[ {\hat b_{j} = 0\left| {b_j = 1} \right.} \right] = \Pr \left[ {N\left[ j \right] < {N_{{\rm{th}}}}} \right]
\nonumber\\  &\approx     {{\mathbb{E}}_{R_{{\rm{tot}}}^j}}\left\{ {\Pr \left[ {\left. {{P}\left( {{N_{{\rm{tx}}}}R_{{\rm{tot}}}^j} \right) < {N_{{\rm{th}}}}} \right|R_{\rm{tot}}^j} \right]} \right\}
\nonumber\\  &
 = {{\mathbb{E}}_{R_{\rm{tot}}^j}}\left\{ {\sum\limits_{n = 0}^{{N_{{\rm{th}}}} - 1} {\frac{1}{{n!}}\exp \left\{ { - {N_{{\rm{tx}}}}R_{\rm{tot}}^j} \right\}{{\left( {{N_{{\rm{tx}}}}R_{\rm{tot}}^j} \right)}^n}} } \right\}
\nonumber\\  & =
\sum\limits_{n = 0}^{{N_{{\rm{th}}}} - 1} {\frac{1}{{n!}}\int_0^\infty  {\exp \left\{ { - N_{{\rm{tx}}}\tau } \right\}{{\left( {N_{{\rm{tx}}}\tau } \right)}^n}} \diff \Pr \left( {R_{\rm{tot}}^j \le \tau } \right)},  
\end{align} 
where
\begin{align}
R_{\rm{tot}}^j = \sum\limits_{x \in {\Phi _a}} {R\left( {\left. {{\Omega _{{r_r}}},j} \right|\left\| {\bf{x}} \right\|} \right)} ,  \label{R_tot_j}
\end{align}
	with ${R\left( {\left. {{\Omega _{{r_r}}},j} \right|\left\| {\bf{x}} \right\|} \right)}$ given in \eqref{R_FA} for the absorbing receiver and in \eqref{R_PS} for the passive receiver.

\bibliographystyle{vancouver}
\bibliography{Ref}

\begin{thebibliography}{10}
\providecommand{\url}[1]{#1}
\csname url@samestyle\endcsname
\providecommand{\newblock}{\relax}
\providecommand{\bibinfo}[2]{#2}
\providecommand{\BIBentrySTDinterwordspacing}{\spaceskip=0pt\relax}
\providecommand{\BIBentryALTinterwordstretchfactor}{4}
\providecommand{\BIBentryALTinterwordspacing}{\spaceskip=\fontdimen2\font plus
\BIBentryALTinterwordstretchfactor\fontdimen3\font minus
  \fontdimen4\font\relax}
\providecommand{\BIBforeignlanguage}[2]{{%
\expandafter\ifx\csname l@#1\endcsname\relax
\typeout{** WARNING: IEEEtran.bst: No hyphenation pattern has been}%
\typeout{** loaded for the language `#1'. Using the pattern for}%
\typeout{** the default language instead.}%
\else
\language=\csname l@#1\endcsname
\fi
#2}}
\providecommand{\BIBdecl}{\relax}
\BIBdecl

\bibitem{yansha2016stochastic}
Y.~Deng, A.~Noel, W.~Guo, A.~Nallanathan, and M.~Elkashlan, ``Stochastic
  geometry model for large-scale molecular communication systems,'' in
  \emph{Proc. IEEE GLOBECOM}, Dec. 2016.

\bibitem{EckfordBook13}
T.~Nakano, A.~Eckford, and T.~Haraguchi, \emph{Molecular communication}.\hskip
  1em plus 0.5em minus 0.4em\relax Cambridge University Press, 2013.

\bibitem{Codling08}
E.~Codling, M.~Plank, and S.~Benhamous, ``Random walk models in biology,''
  \emph{Journal of The Royal Society Interface}, vol.~5, no.~25, pp. 813--834,
  Aug. 2008.

\bibitem{Atkinson09}
S.~Atkingson and P.~Williams, ``Quorum sensing and social networking in the
  microbial world,'' \emph{Journal of The Royal Society Interface}, vol.~6,
  no.~40, pp. 959--978, Aug. 2009.

\bibitem{SMIET2017}
W.~Guo, Y.~Deng, H.~B. Yilmaz, N.~Farsad, M.~Elkashlan, C.~Chae, A.~W. Eckford,
  and A.~Nallanathan, ``{SMIET:} simultaneous molecular information and energy
  transfer,'' \emph{IEEE Wireless Comm.}, 2017.

\bibitem{Llatser13}
I.~Llatser, A.~Cabellos-Aparicio, and M.~Pierobon, ``Detection techniques for
  diffusion-based molecular communication,'' \emph{{IEEE} J. Sel. Areas
  Commun.}, vol.~31, pp. 726--734, Dec. 2013.

\bibitem{Guo15TMBMC}
W.~Guo, C.~Mias, N.~Farsad, and J.~Wu, ``Molecular versus electromagnetic wave
  propagation loss in macro-scale environments,'' \emph{IEEE Trans. Mol. Biol.
  Multi-Scale Commun.}, vol.~1, Mar. 2015.

\bibitem{howard1993radom630}
H.~C. Berg, \emph{Random Walks in Biology}.\hskip 1em plus 0.5em minus
  0.4em\relax Princeton University Press, 1993.

\bibitem{Kuran10}
M.~S. Kuran, H.~B. Yilmaz, T.~Tugcu, and B.~O. Edis, ``Energy model for
  communication via diffusion in nanonetworks,'' \emph{Nano Commun. Netw.},
  vol.~1, no.~2, pp. 86--95, Apr. 2010.

\bibitem{yilmaz2014simulation}
H.~B. Yilmaz and C.-B. Chae, ``Simulation study of molecular communication
  systems with an absorbing receiver: Modulation and {ISI} mitigation
  techniques,'' \emph{Simulat. Modell. Pract. Theory}, vol.~49, pp. 136--150,
  Dec. 2014.

\bibitem{Yansha16}
Y.~Deng, A.~Noel, M.~Elkashlan, A.~Nallanathan, and K.~C. Cheung, ``Modeling
  and simulation of molecular communication systems with a reversible
  adsorption receiver,'' \emph{IEEE Trans. Mol. Biol. Multi-Scale Commun.},
  vol.~1, no.~4, pp. 347--362, Dec. 2015.

\bibitem{tepekule2015ISI}
B.~Tepekule, A.~E. Pusane, H.~B. Yilmaz, C.~B. Chae, and T.~Tugcu, ``{ISI}
  mitigation techniques in molecular communication,'' \emph{IEEE Trans. Mol.
  Biol. Multi-Scale Commun.}, vol.~1, no.~2, pp. 202--216, Jun. 2015.

\bibitem{shih2013code}
P.~J. Shih, C.~H. Lee, P.~C. Yeh, and K.~C. Chen, ``Channel codes for
  reliability enhancement in molecular communication,'' \emph{{IEEE} J. Sel.
  Areas Commun.}, vol.~31, no.~12, pp. 857--867, Dec. 2013.

\bibitem{Douglas12}
S.~M. Douglas, I.~Bachelet, and G.~M. Church, ``A logic-gated nanorobot for
  targeted transport of molecular payloads,'' \emph{Science}, vol. 335, no.
  6070, pp. 831--834, Feb. 2012.

\bibitem{Cavalcanti06}
A.~Cavalcanti, T.~Hogg, B.~Shirinzadeh, and H.~Liaw, ``Nanorobot communication
  techniques: {A} comprehensive tutorial,'' in \emph{Proc. IEEE Int. Conf.
  Control, Autom., Robot., Vis.}, Dec. 2006, pp. 1--6.

\bibitem{adam2014unify}
A.~Noel, K.~C. Cheung, and R.~Schober, ``A unifying model for external noise
  sources and {ISI} in diffusive molecular communication,'' \emph{{IEEE} J.
  Sel. Areas Commun.}, vol.~32, no.~12, pp. 2330--2343, Dec. 2014.

\bibitem{multiplea2015amini}
G.~Aminian, M.~Farahnak-Ghazani, M.~Mirmohseni, M.~Nasiri-Kenari, and F.~Fekri,
  ``On the capacity of point-to-point and multiple-access molecular
  communications with ligand-receptors,'' \emph{{IEEE} Trans. Mol. Biol.
  Multi-Scale Commun.}, vol.~1, no.~4, pp. 331--346, Dec. 2015.

\bibitem{pierobon2014statistical}
M.~Pierobon and I.~F. Akyildiz, ``A statistical--physical model of interference
  in diffusion-based molecular nanonetworks,'' \emph{{IEEE} Trans. Commun.},
  vol.~62, no.~6, pp. 2085--2095, Jun. 2014.

\bibitem{Haenggi12}
M.~Haenggi, \emph{Stochastic geometry for wireless networks}.\hskip 1em plus
  0.5em minus 0.4em\relax Cambridge, UK: Cambridge Uni. Press, 2012.

\bibitem{hasan2007guard}
A.~Hasan and J.~G. Andrews, ``The guard zone in wireless ad hoc networks,''
  \emph{{IEEE} Trans. Wireless Commun.}, vol.~6, no.~3, pp. 897--906, Mar.
  2007.

\bibitem{novlan2013uplink}
T.~D. Novlan, H.~S. Dhillon, and J.~G. Andrews, ``Analytical modeling of uplink
  cellular networks,'' \emph{{IEEE} Trans. Wireless Commun.}, vol.~12, no.~6,
  pp. 2669--2679, Jun. 2013.

\bibitem{yan2016sensor}
Y.~Deng, L.~Wang, M.~Elkashlan, A.~Nallanathan, and R.~K. Mallik, ``Physical
  layer security in three-tier wireless sensor networks: {A} stochastic
  geometry approach,'' \emph{IEEE Trans. Inf. Forensics Security}, to appear
  2016.

\bibitem{yan16hetnet}
Y.~Deng, L.~Wang, M.~Elkashlan, M.~Direnzo, and J.~Yuan, ``Modeling and
  analysis of wireless power transfer in heterogeneous cellular networks,''
  \emph{{IEEE} Trans. Commun.}, no.~99, pp. 1--1, Jul. 2016.

\bibitem{Jeanson11}
S.~Jeanson, J.~Chadoeuf, M.~Madec, S.~Aly, J.~Floury, T.~F. Brocklehurst, and
  S.~Lortal, ``Spatial distribution of bacterial colonies in a model cheese,''
  \emph{Applied and Environmental Microbiology}, vol.~77, no.~4, pp.
  1493--1500, Dec. 2010.

\bibitem{noel2014improving}
A.~Noel, K.~C. Cheung, and R.~Schober, ``Improving receiver performance of
  diffusive molecular communication with enzymes,'' \emph{IEEE Trans.
  Nanobiosci.}, vol.~13, no.~1, pp. 31--43, Mar. 2014.

\bibitem{Yilmaz14}
H.~B. Yilmaz, A.~C. Heren, T.~Tugcu, and C.-B. Chae, ``{Three-dimensional
  channel characteristics for molecular communications with an absorbing
  receiver},'' \emph{IEEE Commun. Lett.}, vol.~18, no.~6, pp. 929--932, Jun.
  2014.

\bibitem{baccelli2009stochastic}
F.~Baccelli and B.~Blaszczyszyn, \emph{Stochastic geometry and wireless
  networks: Volume 1: Theory}.\hskip 1em plus 0.5em minus 0.4em\relax Now
  Publishers Inc, 2009, vol.~1.

\bibitem{schulten2000lectures}
K.~Schulten and I.~Kosztin, \emph{Lectures in Theoretical Biophysics}, 2000.

\bibitem{cussler2009diffusion}
E.~L. Cussler, \emph{Diffusion: Mass Transfer in Fluid Systems}.\hskip 1em plus
  0.5em minus 0.4em\relax Cambridge university press, 2009.

\bibitem{heren2015degradation}
A.~C. Heren, H.~B. Yilmaz, C.~B. Chae, and T.~Tugcu, ``Effect of degradation in
  molecular communication: Impairment or enhancement?'' \emph{{IEEE} Trans.
  Mol. Biol. Multi-Scale Commun.}, vol.~1, no.~2, pp. 217--229, Jun. 2015.

\bibitem{arman2016reactive}
A.~Ahmadzadeh, H.~Arjmandi, A.~Burkovski, and R.~Schober, ``Reactive receiver
  modeling for diffusive molecular communication systems with molecule
  degradation,'' in \emph{Proc. IEEE ICC}, May 2016, pp. 1--7.

\bibitem{noel2014overcome}
A.~Noel, K.~C. Cheung, and R.~Schober, ``Overcoming noise and multiuser
  interference in diffusive molecular communication,'' in \emph{Proc. ACM
  NANOCOM}, May 2014, pp. 1:1--1:9.

\bibitem{nelson2004biological}
P.~Nelson, \emph{Biological Physics: Energy, Information, Life}, updated
  1st~ed.\hskip 1em plus 0.5em minus 0.4em\relax W. H. Freeman and Company,
  2008.

\bibitem{noel2013using}
A.~Noel, K.~C. Cheung, and R.~Schober, ``Using dimensional analysis to assess
  scalability and accuracy in molecular communication,'' in \emph{Proc. IEEE
  ICC MoNaCom}, Jun. 2013, pp. 818--823.

\bibitem{Noel2017}
\BIBentryALTinterwordspacing
A.~Noel, K.~C. Cheung, R.~Schober, D.~Makrakis, and A.~Hafid, ``Simulating with
  {A}c{C}o{RD}: Actor-based communication via reaction–diffusion,''
  \emph{Nano Commun. Netw.}, to appear. [Online]. Available:
  \url{http://doi.org/10.1016/j.nancom.2017.02.002}
\BIBentrySTDinterwordspacing

\bibitem{Haenggi05}
M.~Haenggi, ``{On distances in uniformly random networks},'' \emph{{IEEE}
  Trans. Inf. Theory}, vol.~51, pp. 3584--3586, Oct. 2005.

\bibitem{roman1980formula}
S.~Roman, ``The formula of {F}aa di {B}runo,'' \emph{American Mathematical
  Monthly}, pp. 805--809, 1980.

\end{thebibliography}

\balance
\end{document}